\theoremstyle{plain}
\newtheorem{theorem}{Theorem}
\newtheorem{lemma}{Lemma}
\newtheorem{algorithm}{Algorithm}
\newtheorem{definition}{Definition}
\theoremstyle{definition}
\newtheorem{example}{Example}
\DeclareMathOperator{\B}{\mathcal{B}} 
\DeclareMathOperator{\CPT}{CPT} 
\DeclareMathOperator{\Tr}{Tr} 
\DeclareMathOperator{\id}{id} 
\def\lsim{\mathrel{\rlap{\lower4pt\hbox{\hskip1pt$\sim$}}
    \raise1pt\hbox{$<$}}}                
\def\gsim{\mathrel{\rlap{\lower4pt\hbox{\hskip1pt$\sim$}}
    \raise1pt\hbox{$>$}}}                
\newcommand{\ket}[1]{\left| #1\right\rangle}        
\newcommand{\bra}[1]{\left\langle #1\right|}        
\newcommand{\ketbra}[2]{| #1 \rangle\langle #2 |} 
\newcommand{\caseone}{(i)}
\newcommand{\casetwo}{(ii)}
\newcommand{\fail}{p_\mathrm{fail}}
\begin{document}

\title{Fault-ignorant Quantum Search}
\author{P\'eter Vrana}\thanks{vranap@math.bme.hu}
\affiliation{Institute for Theoretical Physics, ETH Z\"urich, 8093 Z\"urich, Switzerland}
\author{David Reeb}\thanks{david.reeb@tum.de}
\affiliation{Department of Mathematics, Technische Universit\"at M\"unchen, 85748 Garching, Germany}
\author{Daniel Reitzner}\thanks{reitzner@savba.sk}
\affiliation{Department of Mathematics, Technische Universit\"at M\"unchen, 85748 Garching, Germany}
\affiliation{Institute of Physics, Slovak Academy of Sciences, 845 11 Bratislava, Slovakia}
\author{Michael M.\ Wolf} \thanks{m.wolf@tum.de}
\affiliation{Department of Mathematics, Technische Universit\"at M\"unchen, 85748 Garching, Germany}
\pacs{03.67.-a, 03.67.Ac, 03.67.Pp}
\begin{abstract}
We investigate the problem of quantum searching on a noisy quantum computer.
Taking a \emph{fault-ignorant} approach, we analyze quantum algorithms that solve the task for various different noise strengths,
which are possibly unknown beforehand. We prove lower bounds on the runtime of such algorithms and thereby find that the quadratic speedup is necessarily lost (in our noise models). However, for low but constant noise levels the algorithms we provide (based on Grover's algorithm) still outperform the best noiseless classical search algorithm.
\end{abstract}

\maketitle


\vspace{-0.3cm}
\section{Introduction}\label{sec:intro}
Since the inception of quantum computing \cite{nielsenchuang}, a large effort has been devoted to making quantum computers function in a noisy environment and securing them against imperfections in the physical setup itself. The theoretical literature offers several ways to cope with such errors. The leading idea is to encode quantum states into a larger system \cite{Shor,Steane} such that noise hits can be recognized and subsequently corrected. A quantum computation can then be performed in a fault-tolerant way directly on these encoding systems \cite{shorfaulttolerant,klz98b}, and nested levels of error-correction can make the computation error-free. The last statement assumes that the initial error rate is not too high and that the error hits are not too correlated, e.g.~occur locally. Beyond these assumptions, quantum error-correction schemes require significant resources in terms of circuit size and experimental control \cite{preskill,gottesman}. Other approaches use decoherence-free subspaces, employing for the quantum computation those parts of a system's Hilbert space which are relatively unaffected by the noise \cite{Zanardi,Lidar}. The latter approach works only in more limited circumstances and requires detailed knowledge of the noise.

Here, we follow a different idea which tries to avoid the disadvantages just mentioned. Rather than devoting large efforts to eliminating the errors at all costs, we accept them in the computation and try to design algorithms that eventually still find the desired result. The spatial size of the quantum circuit should however not be enlarged much beyond the noiseless version of the algorithm; one may e.g.\ allow only a number of extra qubits that remains bounded by a constant as the problem size becomes large \cite{gottesman} (whereas it seems reasonable to allow for an exponentially large noiseless classical memory).

Furthermore, the algorithms should find the desired result under any level of background noise --- this level may actually be unknown to the algorithm, hence the term \emph{fault-ignorant computation}. The algorithm is, however, allowed to take more time the larger the actual noise gets. In this sense, we are trading spatial resources (circuit size) for temporal resources (runtime). Still, when the actual noise level is low (but constant in the problem size), we want our algorithms to produce the desired result faster than any classical algorithm even in a noiseless environment --- this indeed will be the case for the explicit algorithms we present. We describe the fault-ignorant approach and these desiderata in more detail in Section \ref{moreformaldescription}.

While under noise the resulting fault-ignorant algorithms may not give the full quantum speedup for large problem sizes, they may still be useful for initial and medium-term realizations of quantum computers, in particular in the non-scalable low-qubit number regime which does not allow for full-blown quantum error correction schemes.

\bigskip

In this paper, we analyze the above \emph{fault-ignorant idea} for the example of quantum search on a noisy quantum computer. The unstructured search problem, i.e.~the search for a marked item in an unordered list with oracle access, can be solved on a noiseless quantum computer by Grover's algorithm  quadratically faster than by any classical algorithm \cite{TheGrover,nielsenchuang}. This quantum speedup is optimal \cite{BBBV,BBHT,Zalka}.

Good algorithms or optimality issues for the noisy case are however much less clear. Many studies in this direction investigate how (different models of) background noise or a faulty oracle affect specifically Grover's algorithm (e.g.~\cite{Lang,Hsieh,Shapira,Azuma,PabloNorman,SBW}), and it is often found that the quadratic speedup persists as long as the noise stays below a certain threshold depending on the database size. Only the paper \cite{RegevSchiff} by Regev and Schiff gives, for a specific model of a faulty oracle, a general lower bound on the number of oracle invocations necessary to find the marked item by \emph{any} quantum algorithm; ref.\ \cite{temme} proves the analogous result for any continuous-time implementation of oracle search. A concrete algorithm that functions under a faulty oracle is briefly suggested in \cite{SBW}, advocating the avoidance of active error correction as well, as we do.

In the quantum search part of this paper, we address the strengths and weaknesses of general oracular search algorithms under noise in more detail. Notably, we state all our upper and lower runtime bounds with explicit prefactors as e.g.\ in Theorems \ref{thmsymmalgomemoryless}--\ref{thmlowerboundwithmemory}, which in particular allows a comparison with the performance of classical search algorithms already for ``small'' (i.e.\ non-asymptotic) database sizes $N$. On the one hand we give fault-ignorant algorithms, on the other we investigate their optimality. More detailed comparisons to the works \cite{RegevSchiff,SBW,temme} are made in Section \ref{moregeneralmemoryalgos}. The Hilbert space in our setup consists of a search space, possibly supplemented by an ancillary quantum system and both suffering from noise; our constructive algorithms will actually not use ancillary systems, but we allow for them in the proofs of our lower runtime bounds, which strengthens these. Additionally, the algorithms may have access to a noiseless classical memory, which is a technologically realistic assumption. The noise itself is modelled as discrete hits of some noise channel that is to be applied between any two oracle invocations.

The problem of fault-ignorant quantum search is then as follows:\ Devise a quantum algorithm that, except for some specified maximal failure probability $\varepsilon$, returns the marked item under any decoherence rate $p$, using as few oracle calls as possible. The fastest (noiseless) classical algorithm needs $\lceil(1-\varepsilon)N\rceil$ table lookups for this task, examining the database entries one by one. We exhibit quantum algorithms which, under low but constant depolarizing level $p$, require fewer oracle calls than this classical algorithm, see e.g.\ Theorem \ref{thmunknownpwithoutexcluding}.

\bigskip

The paper is organized as follows. We give a more detailed description of the fault-ignorant idea in Sec.\ \ref{moreformaldescription}, while referring to Appendix \ref{sec:faultignorance} for a general and precise mathematical definition of \emph{fault-ignorant algorithms}. Those readers interested mainly in the problem of quantum search on a noisy quantum computer are directed to the remaining sections. In Sec.~\ref{sec:memoryless} we introduce quantum search in the presence of decoherence, and develop a fault-ignorant algorithm (Algorithm \ref{algounknownpnotexcluding}) that consists of ``quantum building blocks''  and uses the noiseless classical memory merely to store the marked item in case of a previous successful round. A matching lower bound on the runtime of such algorithms is given by Theorem \ref{thmsymmalgomemoryless}. We expand this analysis in Sec.~\ref{sec:withmemory} and allow for a noiseless classical memory that can store all previously falsified items, which enhances search efficiency (Algorithm \ref{algounknownpexcluding}). In Appendix \ref{sec:noisemodels} we discuss in more detail the noise models for which our results from the main text apply.

\subsection{Fault-ignorant algorithms -- definition and basic properties}\label{moreformaldescription}
Here we describe the main desiderata on fault-ignorant algorithms, with particular emphasis on oracular algorithms that are the main topic of the following sections. For a precise mathematical formalization of fault-ignorance, which also includes algorithms computing probabilistic functions as in sampling problems and computational algorithms such as factoring, we refer the reader to Appendix \ref{sec:faultignorance} (in particular, Definitions \ref{definitionnoisytask} and \ref{definefaultignorantalgo} there).

The tasks we consider consist in the computation of a classical output $o\in O$ as a function of an oracle index $x\in X$ and an input $i\in I$, which we assume is given via a specified encoding $\rho_i$ in the quantum register at the start of the algorithm. The specification of the task contains already the available size of the quantum register, i.e.\ the Hilbert space $\mathcal{H}$ on which the $\rho_i$ act and which is assumed to be fixed throughout the computation; in early implementations of quantum computers this size may be severely limited and is thus assumed to be part of the problem specification. Besides the quantum register, we allow for a classical register that may serve several purposes:\ {\it{(i)}} to be used during the computation, {\it{(ii)}} to store the output, {\it{(iii)}} to hold a binary flag indicating whether the output has already been written into the register, such that it can be read out by an outside agent without disturbing the (quantum) computation that may still be ongoing (such an indication is necessary since the noise level and the algorithm runtime may not be known in advance, see below).

For the task of oracle search among $N$ items, we thus have $x\in\{1,\ldots,N\}$ and would like to produce the output $o=x$, whereas the quantum register can be initialized in any fixed state $\rho_0$ as there is no other input to the task, i.e.\ $I=\{0\}$. We can allow for any quantum register $B({\mathcal{H}})$; ${\mathcal{H}}=\mathbb{C}^N$ would for example enable to perform Grover's algorithm using the oracle Eq.\ (\ref{oracleaction}) (see Section \ref{setupnoiselessandmodels}), but we may also allow for additional quantum registers such that e.g.\ ${\mathcal{H}}=\mathbb{C}^N\otimes\mathbb{C}^M$.

We have not yet specified in what way fault-ignorant algorithms should behave with respect to noise. For this we need two more ingredients. The first is a family of noise channels, denoted by $D_p$ and acting only on the quantum register, that should model the effect of the noise per unit time (see also Appendix \ref{sec:noisemodels}), and we think of the index $p\in[0,1]$ as a noise strength parameter that is \emph{not} known to the agent executing the algorithm. Classical registers are assumed noise-free. For our results on noisy quantum seach in the following sections, we will for example choose the noise models in Eqs.\ (\ref{partiallydepolnoise}) or (\ref{mostgeneralnoisemodel}). The second ingredient is the specification of the set $S$ of operations that an algorithm can perform per unit time. $S$ may be any subset of all quantum channels acting jointly on the quantum and classical register. Actually, for oracular algorithms, each element $T\in S$ also depends on the oracle index $x$ that is not directly accessible to the algorithm; rather, when an algorithm ``calls'' the operation $T$, then the quantum channel $T(x)$ is executed. Again for the oracular search case below, we allow any
\begin{equation}
T(x)~=~C_2\circ O_x\circ C_1\qquad(\text{for all}~x\in X)
\end{equation}
that calls the oracle $O_x$ from Eq.\ (\ref{oracleaction}) once, possibly preceded and followed by any quantum channels $C_1,C_2$ acting on the quantum and classical registers (and not depending on $x$). For oracular problems, such a specification of the allowed operations per unit time is quite natural, but other choices are possible; for computational problems one can for example impose locality restrictions on the operations per time step, such as in Example \ref{binaryadditionexample} in Appendix \ref{sec:faultignorance}. The specification of what constitutes one time step will be implicit in Eq.\ (\ref{algorithmicsequenceIA}) below, saying that the noise channel $D_p$ is to be applied once between any two operations from $S$.

An \emph{algorithm} is now simply a sequence $(T_n)_{n\in\mathbb{N}}$ of operations $T_n\in S$. Algorithms may depend on an accuracy parameter $\varepsilon\in(0,1)$, denoting the maximum probability with which we allow a wrong answer to be output.

Crucially however, for a \emph{fault-ignorant algorithm} $(T_n)_n$, we \emph{do not} allow the operations $T_n$ to depend on the above noise level $p$, but still require that, for \emph{each} noise level $p\in[0,1]$, after the execution of any number $t\geq t(p)$ of time steps
\begin{equation}\label{algorithmicsequenceIA}
T_t(x)D_pT_{t-1}(x)D_p\ldots D_pT_2(x)D_pT_1(x)\left(\rho_i\otimes|0\rangle_{class}\langle0|\right)
\end{equation}
the classical output register holds the correct output $o$, up to failure probability $\varepsilon$ (see also Fig.\ \ref{fig:ignorance} in Appendix \ref{sec:faultignorance} for illustration). We call an algorithm $(T_n)_n$ \emph{fault-ignorant} if $t(p)$ can be chosen finite for all $p\in[0,1]$, and we will actually denote by $t(p)$ the smallest time such that the previous requirement is satisfied for all $t\geq t(p)$ (see Definition \ref{definefaultignorantalgo} in Appendix \ref{sec:faultignorance} for an exact statement). In other words, a fault-ignorant algorithm is ignorant of the noise level $p$, but should nevertheless output the correct answer $o$ irrespective of the noise $p$, within a time $t(p)<\infty$ that may depend on the unknown noise level $p$.

Note that any computation which can be performed in a noiseless classical register of sufficient size alone constitutes a fault-ignorant algorithm, as we assume classical memory to be unaffected by the noise.

As a further instructive example of a fault-ignorant algorithm, assume the following:\ {\it{(a)}} $(T_n)_{n=1}^k$ is a finite-step algorithm such that, after executing the sequence (\ref{algorithmicsequenceIA}) for $k$ time steps, there is for \emph{any} noise level $p\in[0,1]$ a nonzero success probability $p_s(p)>0$ of obtaining the correct output $o$ in the classical register, {\it{(b)}} the set $S$ contains operations that allow to check whether a tentative result $o'$ in the classical register is correct, {\it{(c)}} the input encoding $\rho_i$ is ``classical'' in the sense that there are operations in $S$ which can extract from $\rho_i$ the index $i$ onto a classical register, and conversely allow to prepare the quantum state $\rho_i$ given the classical value of $i$ (requirement {\it{(c)}} is easy when $|I|=1$, as in the search task). If these three conditions hold, one can construct a fault-ignorant algorithm that solves the task for any fixed accuracy parameter $\varepsilon>0$, as follows:
\begin{itemize}
\item repeatedly run the $k$-step sequence $(T_n)_{n=1}^k$ with noise interspersed as in (\ref{algorithmicsequenceIA}) on the initial state $\rho_i\otimes|0\rangle_{class}\langle0|$ (note that, after the value $i$ has been saved in a classical register at the start, this initial state can be reproduced before each iteration due to condition {\it{(c)}}),
\item after each iteration check for the correctness of the answer proposed in this iteration according to {\it{(b)}}, and store the correct result in an overall output register once it has been found.
\end{itemize}
If the actual noise level is $p$, then after $r$ iterations of $(T_n)_{n=1}^k$, the failure probability equals $(1-p_s(p))^r$, which will become smaller than any specified $\varepsilon\in(0,1)$ if only $r\geq r(p)$ is large enough. More precisely, since each round including verification consumes $k+1$ time steps, the composite check-and-repeat algorithm has a runtime
\begin{equation}\label{genericruntimefaultignorant}
t(p)~=~(k+1)\left\lceil\frac{\log\frac{1}{\varepsilon}}{\log\frac{1}{1-p_s(p)}}\right\rceil~.
\end{equation}

For example, in searches using the quantum oracle (\ref{oracleaction}), one can satisfy the above conditions {\it{(a)}}--{\it{(c)}} by randomly selecting in each iteration an index $o'\in\{1,\ldots,N\}$ (see also the beginning of Section \ref{firstsearchwithexclusionsubsection}) and then using the oracle $O_x$ once to check whether $o'=x$. In this example, $k=0$ and $p_s(p)=1/N$ for all $p$. For another similar example, see the beginning of Section \ref{subsectionmemorylessunknown}. In fact, all the constructive algorithms we present in this paper will be variations of the above check-and-repeat algorithm, with possibly varying numbers $k_g$ of oracle uses in each round $g=1,\ldots,r$ (Section \ref{subsectionmemorylessunknown}, see Fig.\ \ref{fig:alg1} for illustration) and possibly leaving out previously falsified items in future rounds (Section \ref{sec:withmemory}, see also Fig.\ \ref{fig:alg2}).

Due to this simple way of constructing fault-ignorant algorithms, the meaningful question, which we investigate in the following sections for noisy quantum search, is about the efficiency of fault-ignorant algorithms. Notice for example that the first algorithm proposed in the previous paragraph has a runtime of $t(p)\approx N\log\frac{1}{\varepsilon}$, proportional to the problem size $N$ and \emph{independent} of $p\in[0,1]$. However, one might hope that there exist fault-ignorant algorithms which for small actual noise level $p$ need fewer oracle calls, because at least when the noise is \emph{known} to vanish (i.e.\ $p=0$) then Grover's algorithm solves the problem with at most roughly $\frac{\pi}{4}\sqrt{N}\log\frac{1}{\varepsilon}$ oracle calls (see Section \ref{setupnoiselessandmodels}). Furthermore, both these runtimes diverge in the limit of perfect accuracy, i.e.\ $\varepsilon\to0$, whereas the classical algorithm checking the $N$ items one after another needs only a finite number $N$ of steps for perfect accuracy. In fact, in the following sections we will develop a fault-ignorant quantum search algorithm which, under the noise models (\ref{mostgeneralnoisemodel}) and up to a constant overall factor, for any $N$, $p$, $\varepsilon$ requires fewer oracle calls than the ones just mentioned (see Theorem \ref{thmknownpwithoutexcluding}).

When given a fault-ignorant algorithm solving one specific task (e.g.\ one specified problem of size $N$, of specified quantum register size ${\rm dim}(\mathcal{H})$, noise model $N_p$, and accuracy goal $\varepsilon$), one can compare its runtime $t(p)$ to the runtime of other algorithms that can be implemented (e.g.\ to the classical search algorithm above, or to any algorithm that ``knows'' the noise level $p$ as one of its inputs, or to any algorithm that may use a quantum register of some larger size, etc.). On the basis of this comparison one can then decide whether to consider this fault-ignorant algorithm useful w.r.t.\ the competitor. Instead of solely the runtime, one may take into account also other factors in this comparison, such as the size of the quantum register used by either algorithm. It does not seem possible to give general criteria for such a decision. However, due to the prefactors given in the runtime bounds for our concrete algorithms, one can for example use our Theorems \ref{thmunknownpwithoutexcluding} and \ref{thmknownpwithoutexcluding} to compare these fault-ignorant algorithms in such a way to other algorithms (which we for example do below Theorem \ref{thmunknownpwithoutexcluding} and in Section \ref{moregeneralmemoryalgos}).

Instead of performing such a comparison for one specific task, one may consider a whole family of tasks (e.g.\ one for each problem size $N\in\mathbb{N}$ and accuracy $\varepsilon\in(0,1)$, possibly also allowing the size of the quantum register to vary independently of $N$, etc.) and fault-ignorant algorithms solving them. In this situation one can then investigate the \emph{scaling} of the runtime $t(p)$ with these parameters $N$, $\varepsilon$, etc., as is usual in complexity theory, and investigate various tradeoffs, e.g.\ between the runtime and the size of the quantum register. Again, the concrete questions seem to depend highly on the tasks at hand.

Nevertheless, since the main feature of fault-ignorant algorithms is to find the correct answer \emph{without knowing} the noise level $p$ in advance, we can introduce a distinguished notion of \emph{efficient fault-ignorant algorithm}: A family of fault-ignorant algorithms (each solving a task from a given family of tasks) is called \emph{efficient} if there exists a constant $C>0$ such that for each fixed $p\in[0,1]$ the runtime $t(p)$ (which depends on the member of the family) is at most a factor of $C$ larger than the runtime of any algorithm that solves the same task while \emph{knowing} the noise value $p$ as one of its inputs (i.e.\ need \emph{not} be fault-ignorant). In other words, we call a fault-ignorant algorithm \emph{efficient} if knowing the actual noise level $p$ would shorten its runtime by at most an overall factor $1/C$. Our Theorem \ref{thmlowerboundwithmemory} can thus be seen as a statement that Algorithm \ref{algounknownpexcluding} is efficient w.r.t.\ to the class of algorithms considered and, furthermore, an upper bound on the constant $C$ is apparent together with Theorem \ref{thmknownpwithoutexcluding}. Independently of this efficiency notion, for low enough (but unknown) noise level $p$, the runtime of Algorithm \ref{algounknownpexcluding} compares favorably with noiseless classical search (see Section \ref{moregeneralmemoryalgos}).

\section{Quantum search under noise --- memoryless approach}\label{sec:memoryless}

\subsection{Setup: Noiseless and noisy quantum search}\label{setupnoiselessandmodels}
The quantum search problem \cite{BBBV,nielsenchuang} asks for an algorithm of short runtime to identify (up to some small error probability $\varepsilon$) one out of $N$ oracles, i.e.~to return the index $x$ of the ``black box'' implementing the unitary transformation
\begin{equation}
\widetilde{O}_x:\mathbb{C}^N\otimes\mathbb{C}^2\to\mathbb{C}^N\otimes\mathbb{C}^2~,\quad\ket{x',b}\mapsto\ket{x',b\oplus\delta_{xx'}}~,\label{oraclebasic}
\end{equation}
where $x\in\{1,\ldots,N\}$ and $\oplus$ denotes addition modulo 2. Here, we assume the oracle $x$ to have been chosen uniformly at random (often referred to as unstructured database). It is customary to take the input of the oracle $\widetilde{O}_x$ at each step to be of the product form $\ket{\varphi}\otimes\frac{1}{\sqrt{2}}(\ket{0}-\ket{1})$ so that the output is also of this form with the \emph{sign} of the coefficient of $\ket{x}$ flipped in the first tensor factor $\ket{\varphi}$. In this case one can neglect the second subsystem and concentrate on the effective unitary transformation
\begin{equation}
O_x:\mathbb{C}^N\to\mathbb{C}^N~,\quad\ket{x'}\mapsto(-1)^{\delta_{xx'}}\ket{x'}~.\label{oracleaction}
\end{equation}
As usual, we measure the runtime of an oracular algorithm by counting the number of queries (oracle uses), which are viewed as the expensive or time-consuming operations, and disregard all the other quantum channels which are independent of the oracle.

Grover \cite{TheGrover} found a solution to this problem which makes use of the equal superposition state $\ket{\psi}\equiv\frac{1}{\sqrt{N}}\sum_{x=1}^N\ket{x}$ --- this state reflects the initial lack of knowledge about the oracle and will be used frequently in the following. At the beginning of the algorithm the state $\ket{\psi}$ is prepared and then the oracle black box $O_x$ and the unitary $I-2\ketbra{\psi}{\psi}$ are applied alternately. (In this description, and in our whole paper, we disregard any subsystem structure of $\mathbb{C}^N$; if, for example, $\mathbb{C}^N=(\mathbb{C}^2)^{\otimes n}$, then the ``inversion about the mean'' $I-2\ketbra{\psi}{\psi}$ can be implemented efficiently in the number $n$ of qubits \cite{nielsenchuang,TheGrover}). After $k$ applications of both operators, a von Neumann measurement is performed in the standard basis. The outcome of this measurement will give the correct index $x$ of the oracle with probability \cite{nielsenchuang}
\begin{equation}
\sin^2\left((2k+1)\arcsin{\frac{1}{\sqrt{N}}}\right)~,\label{noiselesssuccessprobab}
\end{equation}
independently of which oracle $x\in\{1,\ldots,N\}$ was implemented by the black box. In particular, if we choose $k=\lfloor\frac{\pi}{4}\sqrt{N}\rfloor$, the success probability is $1-O(N^{-1/2})$. Alternatively, if we fix some small maximal error probability $\varepsilon\gsim N^{-1/2}$, with which the algorithm may return an incorrect oracle index, then we may stop after $k=\lceil\frac{1}{4}\sqrt{N}\arccos(2\varepsilon-1)\rceil$ iterations. In fact, it can be shown, for any $0\le k<\frac{\pi}{4}\sqrt{N}$, that Grover's algorithm yields the highest probability of success which can be achieved by any quantum algorithm using the oracle $k$ times \cite{Zalka}.

\bigskip

The above analysis is valid when the unitaries and measurements etc.~can be implemented perfectly and the quantum computer is not subject to noise. In more realistic settings, however, these idealizations have to be lifted, and some such extensions have been considered in the literature before, cf.~Section \ref{sec:intro}. In this work, we consider the specific case where the quantum computer is continually affected by noise, e.g.~coming from the environment.

Our aim is not to approach this problem by implementing quantum error correction, which may be expensive in terms of the required control and size of the quantum computer. Rather, we aim to find (optimal) algorithms which succeed even under the influence of --- known, or ideally even \emph{unknown} --- noise, in such a way that their runtime may depend on the noise level; see Sec.\ \ref{sec:intro}.

\bigskip

Throughout this paper the term ``noise'' will mean the application of a certain quantum channel to the state of the quantum register in discrete time steps. This is supposed to model, within the discrete-time setting of oracle algorithms, that the quantum computer is continually affected by noise. More precisely, we will impose that the noise channel has to act \emph{once between any two invocations of the oracle}.

Our paradigmatic example of noise will be the family of \emph{partial depolarizing} channels
\begin{equation}
D_p(\varrho)~:=~p\frac{I_d}{d}\Tr(\varrho)\,+\,(1-p)\varrho~,\label{partiallydepolnoise}
\end{equation}
acting on states $\varrho$ on a $d$-dimensional Hilbert space. We can interpret these channels as acting on the system in a completely depolarizing way if a biased coin toss yields heads, which happens with probability $p\in[0,1]$, and otherwise leaving the quantum computer undisturbed. Intuitively speaking, the partially depolarizing noise (\ref{partiallydepolnoise}) discards the whole quantum register with probability $p$ between any two successive oracle invocations. In particular, quantum error correction cannot be applied to this noise model (cf.~Appendix \ref{sec:noisemodels}); but so it serves to illustrate our idea of \emph{fault-ignorant computing}, one of whose rationales actually is to avoid costly error correction procedures. In Appendix \ref{sec:noisemodels} we will further introduce partially dephasing noise (\ref{definepartialdephasing}), which has an additional interpretation as modelling the transitioning from quantum to classical algorithms, and we relate the different noise models.

The lower bounds on the runtime of noisy quantum search algorithms which we prove (Theorems \ref{thmsymmalgomemoryless} and \ref{thmlowerboundwithmemory}) rely on partial depolarizing (\ref{partiallydepolnoise}), which is a very drastic and in some implementations quite pessimistic noise model, as it acts in a strongly correlated way across the whole quantum register (somewhat similar to a noisy oracle \cite{RegevSchiff,SBW,temme}, see also Sec.\ \ref{moregeneralmemoryalgos}). For initial implementations of quantum computing this may in some cases indeed be a reasonable assumption, e.g.\ when $p$ denotes the occurrence probability of a noise event requiring the restarting of the whole quantum computer. The noise strength $p$ could for example be related to the timescale of a drifting laser or of collective hits by external stray magnetic fields. Nevertheless, it would be desirable to prove similar lower runtime bounds for weaker noise models, in particular incorporating some kind of locality, but for now our bounds provide at least a (pessimistic) point of comparison.

On the other hand, the concrete algorithms we provide will function with the guaranteed upper runtime bounds given in Theorems \ref{thmunknownpwithoutexcluding} and \ref{thmknownpwithoutexcluding} even under any more general noise of the form
\begin{equation}\label{mostgeneralnoisemodel}
D_p(\varrho):=pT(\varrho)+(1-p)\varrho~,
\end{equation} with $T$ being an arbitrary quantum channel. I.e.\ it is necessary only that with probability $(1-p)$ at each step no fatal noise event occurs. Partial depolarizing and partial dephasing are special cases of this.

\subsection{Building block for noisy search algorithms}\label{basicbuildingblocksubsection}
Let us see how the probability of a successful measurement, i.e.~returning the correct oracle index $x$, looks when we include an application of the noise channel after each query. As a preparation, we first consider Grover's algorithm under noise. Introducing the channel $G_x(\varrho)=\left((I-2\ketbra{\psi}{\psi})O_x\right)\varrho\left((I-2\ketbra{\psi}{\psi})O_x\right)^\dagger$, the final state can be written as $(D_p G_x)^k(\ketbra{\psi}{\psi})$, so that the success probability is then
\begin{equation}\label{firststatementsofpsinmaintext}
p_s(N,k,p)~:=~\sum_{x=1}^N\frac{1}{N}\bra{x}(D_p G_x)^k(\ketbra{\psi}{\psi})\ket{x}~,
\end{equation}
where we took an average over the $N$ possible oracles, since the search is unstructured and we assume equal a priori probabilities. In this paper, we choose to consider the \emph{average success probability} of algorithms, i.e.~averaged over all possible oracles with equal weight (see e.g.~\cite{Zalka}), as opposed to the minimal success probability, i.e.~minimized over all oracles (e.g.~\cite{BBHT,nielsenchuang}). Both figures of merit agree for ``symmetric algorithms'', e.g.~for Grover's algorithm \cite{Zalka} and for the constructive algorithms we propose in this paper. But our choice strengthens the lower bounds derived in the following on the required number of oracle invocations.

Now, for all above noise models $D_p=pD_1+(1-p)\id$ (see Eq.\ (\ref{mostgeneralnoisemodel})), the evolution $(D_p G_x)^k$ can be written as a sum of $2^k$ histories. Since each term gives a nonnegative contribution to the sum, we can find a lower bound by keeping only the noise-free term $(1-p)\id$ in each factor:
\begin{equation}
p_s(N,k,p)~\ge~(1-p)^k\sin^2\left((2k+1)\arcsin{\frac{1}{\sqrt{N}}}\right)~,\label{lowerboundprobs}
\end{equation}
which is quite sharp unless $kp\gg 1$, cf.~Appendix \ref{sec:noisemodels}; compare this also to noiseless case, Eq.~(\ref{noiselesssuccessprobab}). (The convention $0^0=1$ is understood throughout this paper.)

\bigskip

We are now interested in how well this simple algorithm, and other algorithms that we shall consider below (i.e.~not necessarily consisting of Grover steps), perform. That is, we would now like to derive \emph{upper} bounds on the success probability $p_s$ depending on $N$, $k$, and $p$. As the starting point we use the implicit bound on $p_s$ derived by Zalka \cite{Zalka} for the average success probability $p_s$ after $k$ oracle calls:
\begin{equation}
2N-2\sqrt{N}\sqrt{p_s}-2\sqrt{N}\sqrt{N-1}\sqrt{1-p_s}~\le~4k^2~.\label{zalkaimplicit}
\end{equation}
This bound has been established in \cite{Zalka} for the following situation:\ We start from any pure state $\ket{\phi}\in\mathbb{C}^K$; the oracle $O_x$ ($1\le x\le N$) inverts the coefficients of the basis states within a subset $S_x\subseteq\{1,\ldots,K\}$ where $S_x\cap S_y=\emptyset$ for $x\neq y$; we let arbitrary unitaries $U_1,\ldots,U_k\in\mathbb{C}^{K\times K}$ act after each oracle use; in the end we perform a von Neumann measurement in some basis, and our guess for $x$ is an arbitrary function of the measurement outcome.

The same bound (\ref{zalkaimplicit}) holds thus when we start from any mixed state over $\mathbb{C}^N\otimes\mathbb{C}^{M}$ ($M\geq1$), apply arbitrary channels between the oracle uses, and our guess for $x$ comes from measuring a POVM $(E_x)_{x=1}^N$. This holds because mixed states, quantum channels and POVM measurements can be dilated to pure states, unitary evolutions and von Neumann measurements on a larger system \cite{nielsenchuang}, and the oracles (\ref{oracleaction}) tensored by the identity on all other subsystems still invert coefficients of disjoint sets of basis states.

Even for the more general algorithms described above, we can thus use inequality (\ref{zalkaimplicit}) together with Lemma \ref{lemmaxyineq} (see Appendix \ref{app:technical}), to prove the following \emph{explicit upper bound} on the average success probability $p_s$ of any quantum algorithm using $k$ oracle calls (with or without noise):
\begin{equation}
p_s~\leq~\frac{(2k+1)^2}{N}~.\label{psboundfromzalka}
\end{equation}

This bound does not depend on the noise strength $p$, and thus gives no further restrictions on \emph{noisy} search compared to the noiseless case. But it does enable us to prove a result on the limitations of algorithms employing a noisy quantum register $\mathbb{C}^N\otimes\mathbb{C}^M$ for computation, with the oracle acting on the first subsystem. Note that the computational steps $T_i$, $T_i'$ in any algorithm covered by the following upper bounds on the success probability are nowhere assumed to be necessarily unital. If they all were unital, then after the occurrence of a noise hit of partial depolarizing (corresponding to the term $p\Tr(\varrho)I_d/d$ in (\ref{partiallydepolnoise})) the success probability of finding the marked item would be fixed at $1/N$, whereas non-unital actions might try to correct a partial depolarizing error and increase the success probability (we will comment on a particular non-unital error-detection-and-correction strategy below Eq.\ (\ref{eq:firstbound})). And, actually, the following result holds even for noise channels $D_p^\tau$ that may be somewhat more general than the partial depolarizing given in Eq.\ (\ref{partiallydepolnoise}):
\begin{theorem}[Bound on success probability of building block]\label{thm8kplus1overp}
Let $\varrho_0\in\B(\mathbb{C}^N\otimes\mathbb{C}^M)$ be the initial state of the algorithm, $\widehat{O}_x:\B(\mathbb{C}^N\otimes\mathbb{C}^M)\to\B(\mathbb{C}^N\otimes\mathbb{C}^M)$ defined by $\widehat{O}_x(|y\rangle\langle y'|\otimes\sigma)=(-1)^{\delta_{xy}+\delta_{xy'}}|y\rangle\langle y'|\otimes\sigma$ be the quantum channels implementing the oracles on the first subsystem, and $D_p^\tau(\varrho):=p\tau\Tr{\varrho}+(1-p)\varrho$ the noise channel that is to be applied between any two oracle calls, with $p\in[0,1]$ and $\tau\in\B(\mathbb{C}^N\otimes\mathbb{C}^M)$ any quantum state. Let $T_1,T_1',T_2,T_2',\ldots,T_k,T_k':\B(\mathbb{C}^N\otimes\mathbb{C}^M)\to\B(\mathbb{C}^N\otimes\mathbb{C}^M)$ denote steps in the algorithm, such that the state after $k$ uses of the oracle $\widehat{O}_x$ is
\begin{equation}
\varrho_k^x~=~T_k'\widehat{O}_xT_kD_p^\tau\ldots D_p^\tau T_2'\widehat{O}_xT_2 D_p^\tau T_1'\widehat{O}_xT_1 D_p^\tau(\varrho_0)~,\label{statefrombasicbuildingblock}
\end{equation}
and let the final measurement be given by the POVM $(E_y)_{y=1}^{N}$. Then the average success probability of this algorithm is upper bounded as follows:
\begin{equation}
p_s~:=~\sum_{x=1}^N\frac{1}{N}\Tr[\varrho_k^x E_x]~\leq~\frac{1}{N}+\frac{8}{Np^2}~,\label{moretrivialimplication}
\end{equation}
and
\begin{equation}
p_s~\le~8\frac{k+1}{Np}~.\label{8kplus1overp}
\end{equation}
\end{theorem}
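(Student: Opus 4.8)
\emph{Proof proposal.} The plan is to expand the noise and organize the evolution into ``histories'' according to when the reset part of $D_p^\tau$ acts for the \emph{last} time. Writing $D_p^\tau=(1-p)\id+p\,\mathcal{R}$ with $\mathcal{R}(\varrho):=\tau\Tr(\varrho)$ the reset channel, and noting that \eqref{statefrombasicbuildingblock} contains exactly $k$ noise channels (one before each oracle block), I would expand $\varrho_k^x$ into $2^k$ terms. The key structural observation is that $\mathcal{R}$ destroys all dependence on what came before it: if the last reset occurs immediately before the oracle block indexed $j$, then the state at that point equals $\tau$ and the remaining evolution is a \emph{noiseless} algorithm that uses the oracle $\ell:=k-j+1$ times, starting from $\tau$. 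Summing over all firing patterns of the earlier noise channels --- which do not affect the post-reset state and whose scalar prefactors sum to $(p+(1-p))^{j-1}=1$ --- I would collect the histories by the number $\ell\in\{1,\dots,k\}$ of oracle calls in this final noiseless segment.

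Carrying this out, the ``no reset'' history (weight $(1-p)^k$, a noiseless $k$-call algorithm on $\varrho_0$) combines with the ``reset before the first block'' history (weight $p(1-p)^{k-1}$, a noiseless $k$-call algorithm on $\tau$), so that the total weight of final-segment length $\ell$ is $P_\ell=p(1-p)^{\ell-1}$ for $1\le\ell\le k-1$ and $P_k=(1-p)^{k-1}$, with $\sum_{\ell=1}^k P_\ell=1$. This gives a convex decomposition $p_s=\sum_{\ell=1}^k P_\ell\,q_\ell$, where each $q_\ell$ is the average success probability of a noiseless $\ell$-call algorithm (on $\tau$ or on $\varrho_0$). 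Since \eqref{psboundfromzalka} holds for an arbitrary initial mixed state on $\mathbb{C}^N\otimes\mathbb{C}^M$, I can insert $q_\ell\le(2\ell+1)^2/N$.

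For the first bound \eqref{moretrivialimplication}, the crucial point is handling the ``clean'' weight $P_k=(1-p)^{k-1}$, which is not of the geometric form $p(1-p)^{\ell-1}$. I would use the identity $P_k=\sum_{\ell=k}^{\infty}p(1-p)^{\ell-1}$ together with the monotonicity $(2\ell+1)^2\ge(2k+1)^2$ for $\ell\ge k$ to get $P_k(2k+1)^2\le\sum_{\ell=k}^{\infty}p(1-p)^{\ell-1}(2\ell+1)^2$. Completing the truncated sum to the full geometric series then gives $N\,p_s\le\sum_{\ell=1}^{\infty}p(1-p)^{\ell-1}(2\ell+1)^2$, and a direct evaluation (using $\sum_{\ell\ge1}\ell q^{\ell-1}=(1-q)^{-2}$ and $\sum_{\ell\ge1}\ell^2 q^{\ell-1}=(1+q)(1-q)^{-3}$ with $q=1-p$) yields exactly $1+8/p^2$, i.e.\ \eqref{moretrivialimplication}. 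For the second bound \eqref{8kplus1overp}, I would instead exploit $\ell\le k$ to write $(2\ell+1)^2\le(2\ell+1)(2k+1)$, pull out the factor $2k+1$, apply the same tail-absorption estimate to $P_k$, and evaluate $\sum_{\ell=1}^{\infty}p(1-p)^{\ell-1}(2\ell+1)=1+2/p$; this gives $p_s\le(2k+1)(2+p)/(Np)$, and since $p\le1$ one has $(2k+1)(2+p)\le 3(2k+1)\le 8(k+1)$, establishing \eqref{8kplus1overp}.

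I expect the main obstacle to be precisely the bookkeeping of the clean/no-reset history: a naive application of \eqref{psboundfromzalka} to its weight $(1-p)^k$ overshoots the constant $8$ in \eqref{moretrivialimplication}, and the resolution is the grouping of this term with the reset-before-first-block term (yielding the combined weight $P_k=(1-p)^{k-1}$) followed by the tail-absorption estimate against the infinite geometric series. The remaining ingredients --- the expansion into histories, the factoring out of the earlier-noise weights, and the two elementary series evaluations --- should be routine once the ``memorylessness'' of the reset channel $\mathcal{R}$ is used.
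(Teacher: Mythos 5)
Your proposal is correct and follows essentially the same route as the paper's proof: you decompose $\varrho_k^x$ into histories according to the last firing of the reset part of $D_p^\tau$ (so that each history is a noiseless $\ell$-call algorithm started from $\tau$ or $\varrho_0$), and then apply the Zalka-type bound \eqref{psboundfromzalka} to each history --- exactly the paper's argument, including the observation that the no-reset and reset-before-first-block terms combine to weight $(1-p)^{k-1}$. The only difference is the final scalar estimate, where the paper evaluates the finite sum in closed form, $1+\tfrac{8}{p^2}\left(1-(1-p)^k\right)-\tfrac{8}{p}k(1-p)^k$, and then applies Bernoulli's inequality, whereas you absorb the boundary term into an infinite geometric series and evaluate that; both computations are elementary and yield the same constants.
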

\begin{proof}Introduce the following states:
\begin{equation}
\sigma_i^x~:=~T_k'\widehat{O}_xT_k T_{k-1}'\widehat{O}_xT_{k-1}\ldots T_{k-i+1}'\widehat{O}_xT_{k-i+1}(\tau)
\end{equation}
for $1\le i<k$, and $\sigma_k^x:=T_k'\widehat{O}_xT_k\ldots T_{1}'\widehat{O}_xT_{1}(\varrho_0)$.
With these we can write 
\begin{equation}
\varrho_k^x~=~\sum_{i=1}^{k}p(1-p)^{i-1}\sigma_i^x\,+\,(1-p)^k\sigma_k^x~,
\end{equation}
and hence
\begin{equation}
Np_s~=~\sum_x\Tr[\varrho_k^x E_x]~=~\sum_{i=1}^{k}p(1-p)^{i-1}\sum_x\Tr[\sigma_i^xE_x]\,+\,(1-p)^k\sum_x\Tr[\sigma_k^xE_x]~.
\end{equation}
As the ``computation'' of $\sigma_i^x$ involved $i$ oracle calls, from (\ref{psboundfromzalka}) we have $\sum_x\Tr[\sigma_i^xE_x]\le (2i+1)^2$, and thus
\begin{equation}\label{refereesays15}
\begin{split}
Np_s~
 & \le~\sum_{i=1}^{k}p(1-p)^{i-1} (2i+1)^2\,+\,(1-p)^k (2k+1)^2  \\
 & =~1+\frac{8}{p^2}\left(1-(1-p)^k\right)-\frac{8}{p}k(1-p)^k~,
\end{split}
\end{equation}
which trivially leads to (\ref{moretrivialimplication}). Furthermore, we can use $(1-p)^k\ge 1-kp$ by the Bernoulli inequality to obtain (\ref{8kplus1overp}):
\begin{equation}\label{refereesays16}
Np_s ~\leq~1+\frac{8}{p^2}\,kp~\le~8\frac{k+1}{p}~.
\end{equation}
\end{proof}

\begin{figure}
\begin{center}
\includegraphics{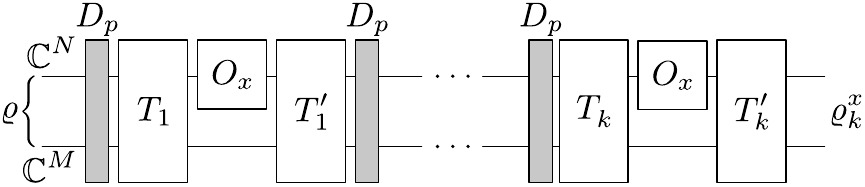}
\end{center}
\caption{\label{fig:alg0}Algorithm 0, performing $k$ steps of quantum search. Noise is acting between any two oracle invocations $O_x$ (see Eq.~(\ref{oracleaction})), before and after which one may however apply arbitrary channels $T_i$, $T'_i$. The computation uses a quantum register of dimension $N$, on which the oracle acts, and an ancillary (quantum) register $\mathbb{C}^M$, both noisy. A measurement, described by some POVM $(E_y)$, is applied to the final state $\varrho_k^x$ to guess the marked element $x$.}
\end{figure}

In the following text we shall address a sequence of operations as in (\ref{statefrombasicbuildingblock}) acting on an initial state $\varrho_0$ as \textbf{Algorithm 0} (or $\mathbf{Alg^0}$), which is also depicted in Fig.~\ref{fig:alg0}.

It is worthwhile to mention two ways of using bounds as in Theorem \ref{thm8kplus1overp}:
\begin{enumerate}
\item[\caseone] one can consider the noise level $p$ to be fixed and examine the scaling behaviour (e.g.\ of the algorithm runtime) with respect to the number $N$ of search items, or
\item[\casetwo] one can consider $p$ to scale in some way with $N$.
\end{enumerate}
Point \casetwo{} provides a way to compare the results with other works (e.g.~\cite{SBW}) where this kind of scaling was analyzed. Our results, however, apply to any values of $N$ and $p$ (and later, of $\varepsilon$) in the stated ranges, but we are implicitly often imagining the case $N\to\infty$, $p={\rm const}$ (and $\varepsilon={\rm const}$), which is a sensible limit as explained in Appendix \ref{sec:noisemodels}.

Theorem \ref{thm8kplus1overp} has two important implications. The first is that, for any fixed $p>0$, the growth of the success probability is at most \emph{linear} in the number $k$ of queries, as opposed to the quadratic growth in the noiseless case (cf.~(\ref{noiselesssuccessprobab}) for $k\ll\frac{\pi}{4}\sqrt{N}$). This may come as a surprise as one might have guessed that the quadratic speedup of Grover's algorithm may persist for small enough noise levels $p>0$ (i.e.~fixed $p$ and $N\to\infty$). In other words, Theorem \ref{thm8kplus1overp} says that there exists \emph{no} algorithm with success probability $p_s\sim k^2$ whenever partial depolarizing acts. The inequality (\ref{8kplus1overp}) in particular implies that quantum error correction \emph{cannot} be done for partially depolarizing noise $D_p$ with $p>0$.

The second implication is that the success probability is bounded by $\frac{1}{N}+\frac{8}{Np^2}$ \emph{independently of $k$}, cf.~(\ref{moretrivialimplication}). For growing $N\to\infty$, this goes to $0$ unless $p=O(N^{-1/2})$; in general we cannot reach a prescribed success probabilty $1-\varepsilon$ with an algorithm as described in Theorem \ref{thm8kplus1overp}. The straightforward solution to this problem is that we repeat the algorithm (including the final measurement) until the probability of failure in all the repetitions combined drops below $\varepsilon$. This strategy is detailed in the following subsection and shows the basic structure for all further algorithms.

\subsection{Search algorithm by repeating the basic building block --- known noise level $p$}\label{subsectrepeatbuildingblockknownp}

Now we consider algorithms that consist of a number of repetitions (rounds) of the basic building block described in the previous subsection:~repeatedly preparing states like (\ref{statefrombasicbuildingblock}), using the oracle $O_x$ a number of times, and trying to infer the oracle index $x$ by a measurement. In our concrete constructions we shall, as in Grover's algorithm \cite{TheGrover,nielsenchuang}, specify the channels $T'_i$ in (\ref{statefrombasicbuildingblock}) to perform unitary inversions ($I-2\ketbra{\psi}{\psi}$) about the mean, and $T_i$ to equal the identity channel (whereas for our lower bounds $T_i$, $T'_i$ remain arbitrary).

Note that these repetitions require a \emph{noiseless} classical memory in order to reliably store the correct result $x$ after one of the rounds has been successful (any noisy classical register may be part of the subsystem $\mathbb{C}^M$ in Theorem \ref{thm8kplus1overp} and is subject to noise $D_p$). Furthermore, in order to test whether the measurement after any one of the repetitions has returned the correct index $x$ and in order not to overwrite the correct result in subsequent rounds, one needs to perform a verification of the measurement outcome --- using \emph{one} oracle call or classical table look-up --- immediately after each measurement.

In this subsection we will first develop and analyze an algorithm that finds, except for some specified failure probability $\varepsilon>0$, the marked element $x$ among $N$ elements, on a noisy quantum computer. Secondly, we will prove (Theorem \ref{thmsymmalgomemoryless}) that, up to a constant, the runtime of this algorithm is optimal among a certain class of algorithms, when the noise level $p$ is known in advance.

Our algorithm will consist of $m$ rounds of the procedure described in Theorem \ref{thm8kplus1overp}, each time checking whether the concluding measurement gave the correct $x$ and, if so, storing the result in a noiseless classical register. Specifically, in each round we perform Grover's algorithm for some number $k$ (to be determined) of steps --- this has been described at the beginning of Subsection \ref{basicbuildingblocksubsection}. We now first give a motivating ``derivation'' of the algorithm.

Clearly, the events ``the noise did not hit and the measurement was unsuccessful in the $i$th round'' and ``the noise did not hit and the measurement was unsuccessful in the $j$th round'' are independent for $i\neq j$ when we use Grover iterations, and similarly for more than two rounds, since Grover's algorithm is symmetric with respect to permutation of the oracles (the probability for the complement of each such event is given by (\ref{lowerboundprobs}), which we will substitute below for $1-p_s$). This means that if we perform $m$ rounds with $k$ Grover iterations in each round, then the probability of failure and the total number of queries will be $(1-p_s(N,k,p))^m$ and $(k+1)m$, respectively (including the verification step after each of the $m$ measurements). If we are to reach the target error bound $\varepsilon$ in the least number of steps, we need to minimize $(k+1)m$ subject to the condition $(1-p_s(N,k,p))^m\le\varepsilon$. The latter gives a lower bound on $m$ depending on $N$, $k$ and $p$, namely
\begin{equation}
(k+1)m~\ge~\frac{k+1}{-\log(1-p_s(N,k,p))}\log\frac{1}{\varepsilon}~.\label{minnumbermemorylesseqn}
\end{equation}
This means that for given values of $N$ and $p$ one needs to minimize
\begin{equation}
R(N,k,p)~:=~\frac{k+1}{-N\log(1-p_s(N,k,p))}~,\label{definitionf}
\end{equation}
i.e.~the number of queries per decrease of error probability by a factor of $e$, where the factor $N^{-1}$ is included for normalization. Let the optimal $k$ be denoted by $k_{opt}(N,p)$. Note that minimizing the rate function $R$ from (\ref{definitionf}) originates from minimizing the number of oracle calls from independent rounds to get the failure probability below $\varepsilon$. This is different than optimizing the \emph{expected} number of oracle calls, cf.~\cite{BBHT,Zalka} for a comparison.

Then our new algorithm consists of $m=\left\lceil\left(\log\varepsilon\right)/{\log\left(1-p_s(N,k_{opt}(N,p),p)\right)}\right\rceil$ repetitions of rounds with $k_{opt}(N,p)$ Grover steps each, measurement in the standard basis plus verification step and storing the result in the classical output register when successful. The discussion in the following example will motivate an easy-to-compute quantity to be used instead of $k_{opt}(N,p)$.

\begin{example}[Asymptotically optimal number of Grover iterations]\label{examplewhereNequalsinfty}
If we are interested in large databases, we can simplify by taking $N\to\infty$, yielding, with (\ref{lowerboundprobs}) instead of (\ref{exactsuccessprobabfordepol}) (giving virtually identical results):
\begin{equation}\label{firsteqninexample}
R(\infty,k,p)~:=~\lim_{N\to\infty}R(N,k,p)~=~\frac{(k+1)}{(1-p)^{k}(2k+1)^2}~.
\end{equation}
To find the optimal $k=k_{opt}(\infty,p)$, we compare ($k\geq 1$):
\begin{equation}
\frac{R(\infty,k-1,p)}{R(\infty,k,p)}~=~\frac{k (2 k+1)^2}{(k+1) (2 k-1)^2}(1-p)~.
\end{equation}
This is a decreasing function of $p$, so $R(\infty,k-1,p)$ and $R(\infty,k,p)$ intersect at
\begin{equation}
p_k(\infty)~:=~\frac{4k^2+4k-1}{k (2 k+1)^2}~=~\frac{1}{k}+O(k^{-3})~,
\end{equation}
which is appropriate for large $k$, and we have $k(\infty,p)=k$ whenever $p_{k+1}(\infty)<p\le p_k(\infty)$ with the convention $p_0(\infty):=1$. By inverting the series expansion around $k=\infty$ one can get an \emph{explicit} approximate expression for $k_{opt}(\infty,p)$:~the inverse is $\frac{1}{p}+O(p)$, and thus $k_{opt}(\infty,p)\approx\lfloor\frac{1}{p}\rfloor$ is a good approximation especially for small $p$.

Intuitively, this behaviour of the optimal length $k_{opt}$ of a quantum round can be understood by noting that the quantum register remains undisturbed with reasonably high probability (of order $O(1)$) for time $\sim1/p$, whereas with a probability approaching $1$ exponentially the quantum register will be disturbed by noise when $k\gtrsim(const)/p$. This is because the noise will hit at each step independently with probablity $p$ (see Eq.\ (\ref{partiallydepolnoise})). Thus, roughly $\sim1/p$ Grover steps provide an advantage (see also, e.g., \cite{SBW}). Plugging $k=\lfloor\frac{1}{p}\rfloor=\frac{1}{p}+O(1)$ into $R(\infty,k,p)$ and considering small noise level $p\ll1$, we see
\begin{equation}\label{lasteqnexampole}
R\left(\infty,\frac{1}{p}+O(1),p\right)~=~\frac{e}{4}p+\frac{e}{4}p^2+O(p^3)~,
\end{equation}
so $R(\infty,k_{opt}(\infty,p),p)$ vanishes linearly in $p$ around $p=0$. The fact that this is non-zero and finite for $p>0$, means that the normalization by $1/N$ in (\ref{definitionf}) was the ``correct'' one. This suggests that the number of steps $m(k+1)$ necessary is proportional to $Np\log(1/\varepsilon)$, i.e.\ linear in $N$ for any fixed $p>0$ and $\varepsilon\in(0,1)$, meaning that the quadratic speedup is lost under depolarizing noise (\ref{partiallydepolnoise}); see Theorem \ref{thmsymmalgomemoryless} for a more rigorous and general lower bound on the runtime.
\end{example}

Example \ref{examplewhereNequalsinfty} motivates a more rigorous analysis of the case of finite $N<\infty$. In this finite case, one has to be careful for small values of $p$, since it is clearly not a good idea to do more than $\frac{\pi}{4}\sqrt{N}$ Grover iterations in one round. Let us first assume that $p>\frac{4}{\pi}\frac{1}{\sqrt{N}}$, and suppose we perform $m=\lceil cNp^2\log\varepsilon^{-1}\rceil$ rounds with $k=\lfloor\frac{1}{p}\rfloor$ Grover steps in each round, for some $c>0$ (this ansatz is motivated by Example \ref{examplewhereNequalsinfty}, and will below turn out to be good). Then the probability of failing in all rounds can be bounded as (see \ref{lowerboundprobs})
\begin{align}
\left[1-p_s\left(N,\left\lfloor\frac{1}{p}\right\rfloor,p\right)\right]^{\lceil cNp^2\log\frac{1}{\varepsilon}\rceil}~
 & \le~ \left[1-(1-p)^{\left\lfloor\frac{1}{p}\right\rfloor}\sin^2\left(\left(2\left\lfloor\frac{1}{p}\right\rfloor+1\right)\arcsin\frac{1}{\sqrt{N}}\right)\right]^{\lceil cNp^2\log\frac{1}{\varepsilon}\rceil} \nonumber \\
 & \le~ \left[1-(1-p)^{\frac{1}{p}}\sin^2\left(\left(\frac{2}{p}-1\right)\arcsin\frac{1}{\sqrt{N}}\right)\right]^{\lceil cNp^2\log\frac{1}{\varepsilon}\rceil}  \nonumber\\
 & \le~\left[1-(1-p)^{\frac{1}{p}}(1-\delta)\left(\frac{2}{p}-1\right)^2\frac{1}{N}\right]^{\lceil cNp^2\log\frac{1}{\varepsilon}\rceil} \nonumber \\
 & \le~\exp\left\{-c(1-p)^\frac{1}{p}(1-\delta)\left(\frac{2}{p}-1\right)^2p^2\log\frac{1}{\varepsilon}\right\}\nonumber\\
&=~ \varepsilon^{c(1-\delta)(1-p)^{\frac{1}{p}}(2-p)^2}\label{boundfailingforknownp}
\end{align}
for some $0<\delta\leq 1-4/\pi^2$ depending on $N$ and $p$.

As we want to guarantee a failing probability of at most $\varepsilon$, we need to choose $c$ such that the exponent in the final expression (\ref{boundfailingforknownp}) is at least $1$ --- independently of $p$ for the following statements to be valid. However, for large values of $p$ the exponent goes to $0$, which is a consequence of vanishing (\ref{lowerboundprobs}) for large $p$ and $k=\lfloor\frac{1}{p}\rfloor\ge 1$ Grover steps; this can be avoided by introducing a cutoff $p^*\in(0,1)$ into the specification of the algorithm, such that for $p\geq p^*$ we use $k=0$ iterations in each round, i.e.~only perform verification steps on randomly chosen elements. The failure probability in this range is
\begin{equation}
\left(1-\frac{1}{N}\right)^{\lceil cNp^2\log\frac{1}{\varepsilon}\rceil}~\le~\left(1-\frac{1}{N}\right)^{cN{p^*}^2\log\frac{1}{\varepsilon}}~\le~\varepsilon^{c{p^*}^2}~.\label{failureproblargep}
\end{equation}

Numerically one finds that viable values for $c$ and $p^*$ in the specification of a concrete algorithm, i.e.~such that both (\ref{boundfailingforknownp}) and (\ref{failureproblargep}) do not exceed $\varepsilon$, are, for example, $c=5$ and $p^*=1/2$ (even when setting $\delta=1-4/\pi^2$). We have not optimized these constants, as our main interest for now is in the scaling for large $N$, small $\varepsilon$, and all noise levels $p$. The number of oracle invocations during such an algorithm is upper bounded by
\begin{equation}
\left(\left\lfloor\frac{1}{p}\right\rfloor+1\right)\left\lceil cNp^2\log\frac{1}{\varepsilon}\right\rceil~
 \le~\frac{2}{p}\left(cNp^2\log\frac{1}{\varepsilon}+1\right)   ~\le~ \frac{\pi}{2}\sqrt{N}+2cNp\log\frac{1}{\varepsilon}~,\label{upperboundnumsteps}
\end{equation}
where we used $p>\frac{4}{\pi}\frac{1}{\sqrt{N}}$.

For small noise levels $p=\frac{\beta}{\sqrt{N}}$ (where $0<\beta\le4/\pi$), one can do $\lfloor\frac{\pi}{4}\sqrt{N}\rfloor$ Grover iterations in each of the $m$ rounds, i.e.~before each measurement, which gives a success probability of at least roughly $e^{-\beta\pi/4}$ by the lower bound (\ref{lowerboundprobs}); therefore, $m=\lceil e^{\beta\pi/4}\log\frac{1}{\varepsilon}\rceil\le\lceil e\log\frac{1}{\varepsilon}\rceil$ rounds are sufficient to get the failure probability below $\varepsilon$, putting an upper bound on the number of oracle calls similar to (\ref{upperboundnumsteps}) with a term proportional to $\sim\sqrt{N}\log(1/\varepsilon)$ instead of the term $\sim Np\log(1/\varepsilon)$.

Summing up, our algorithm finds the marked element $x$, except with probability $\varepsilon\in(0,1]$, on a quantum computer with noise level $p\in[0,1]$ using the oracle at most
\begin{equation}\label{eq:firstbound}
c_1\sqrt{N}+c_2(Np+\sqrt{N})\log\frac{1}{\varepsilon}
\end{equation}
times, e.g.~for (non-optimal) constants $c_1=2,c_2=10$. We omit here a formal statement of the algorithm, which should have become resonably clear from the description above, but will remedy this in Subsection \ref{subsectionmemorylessunknown} for the more general case of unknown noise level.

The algorithm just described performs a number $m$ of quantum rounds, each of identical length $k$ given by an ansatz that is based on Example \ref{examplewhereNequalsinfty}. A cleverer algorithm might try to detect when a noise event has happened and then immediately abort the present round in such a case and start a fresh round in order not to ``waste'' oracle uses. One way to accomplish this would be to adjoin to the quantum register $\mathbb{C}^N$ used above another quantum register $(\mathbb{C}^2)^{\otimes r}$ of $r$ qubits that is initialized to $\ket{0}^{\otimes r}$ at the beginning of each round and is left untouched by the above algorithm. In case of a partial depolarizing noise event, given by the term $pI_{2^rN}(\Tr{\varrho})/(2^rN)$ in (\ref{partiallydepolnoise}), the $r$-qubit register will then be reset to a computational basis state other than $\ket{0}^{\otimes r}$ which can be detected by a projective measurement on this auxilliary system with probability $1-2^{-r}$. Thus, by expending a small number $r$ of extra qubits (e.g.\ a number $r$ that is constant in the problem size $N$, or chosen as $r\sim\log(1/\varepsilon)$) one can detect a noise hit with high probability and abort the present round to gain a saving in the number of oracle calls compared to the algorithm outlined above.

While this is a viable strategy in the noise model used above, it is actually extremely dependent on the noise model (\ref{partiallydepolnoise}). If, for example, the noise would replace the whole quantum state with probability $p$ by $\ket{0_N}\otimes\ket{0}^{\otimes r}$ (instead of $I_{2^rN}/(2^rN)$), then the exact strategy would not work anymore. In particular, any such strategy relies strongly on the fact that the noise is correlated across the whole quantum register. While we allow such strongly correlated noise as a pessimistic assumption from the outset, in particular to prove our \emph{lower} runtime bounds, one would probably not want the actual \emph{constructive} algorithms to rely on this assumption. In contrast, our algorithm developed below Example \ref{examplewhereNequalsinfty} as well as the upper runtime bound (\ref{eq:firstbound}) work for any noise model $D_p^{T}(\varrho)=pT(\varrho)+(1-p)\varrho$ with $T$ \emph{any} quantum channel $T$ (see App.\ \ref{sec:noisemodels}). This is because we only use Eq.\ (\ref{lowerboundprobs}), which merely relies on the fact that with probability $1-p$ the quantum register remains undisturbed. Furthermore, even when relying on an exactly known noise model as e.g.\ in Eq.\ (\ref{partiallydepolnoise}), one would at most save a constant factor of order $1$ by the error-detection-strategy compared to the runtime (\ref{eq:firstbound}) of the algorithm outlined above. This is due to the exponential first factor in (\ref{lowerboundprobs}), which implies that only with small probability $\sim1-e^{-C}$ will the noise hit occur before executing $C/p$ steps in one round (where $C<1$ here, such that there would be a saving).

We would like to point out that there are at least two different interpretations of runtime complexity results like Eq.\ (\ref{eq:firstbound}). Firstly, one can run the algorithm indefinitely long (i.e.~without any a priori bound on the number of rounds) until the marked element is found. Then we can guarantee that the algorithm gives the correct result with probability $1$, and the number of oracle calls required is at most $c_1\sqrt{N}+c_2(Np+\sqrt{N})\log\frac{1}{\varepsilon}$ except with probability $\varepsilon$. Alternatively, one can decide in advance to use the oracle $c_1\sqrt{N}+c_2(Np+\sqrt{N})\log\frac{1}{\varepsilon}$ times before terminating the above algorithm, and after any successful measurement store the result in a classical memory; then, in the end, the marked element will have been found with probability at least $1-\varepsilon$.

With the runtime bound (\ref{eq:firstbound}) at hand, one can look at the case where $p$ is fixed and independent of $N$. Then we see that, unless $p=0$, the leading term is $c_2Np\log(1/\varepsilon)$, i.e.\ proportional to $N$. On the other hand, if one supposes that $p$ scales decreasingly with $N$, the other terms may dominate. In particular, if $p\lsim1/\sqrt{N}$, the leading term is $c_2\sqrt{N}\log(1/\varepsilon)$.

\bigskip

Next we show that our algorithm presented above is essentially optimal within a certain class of algorithms (a wider class of algorithms will be considered in Section \ref{sec:withmemory}). Namely, we assume that the algorithms employ a quantum register $\mathbb{C}^N\otimes\mathbb{C}^M$ (as above in Theorem \ref{thm8kplus1overp}), consist of several ``rounds'' where in each round we prepare some state, apply arbitrary channels and use the oracle an arbitrary number of times (possibly different for each round, but applying the noise channel between any two consecutive queries), do any measurement yielding an element of $\{1,\ldots,N\}$, and verify the result with one oracle use, writing it into a (noiseless) classical register reserved for storing the output if correct. Crucially, we assume that the events of success in each round are independent of each other. This assumption is valid in particular if the noise is symmetric (under permutations of the basis vectors $\ket{x}$ of $\mathbb{C}^N$, which partial depolarizing from Eq.\ (\ref{partiallydepolnoise}) satisfies) and if the steps between measurements are Grover iterations (as for example in our algorithm above).
\begin{theorem}[Lower runtime bound on memoryless algorithms]\label{thmsymmalgomemoryless}
Consider a sequence of algorithms, one for each size of the search space $N=1,2,3,\ldots$, satisfying the assumptions just stated and subject to partial depolarizing noise (Eq.\ (\ref{partiallydepolnoise})). If the success probabilities are $1-\varepsilon_N$, then, asymptotically, the number of queries $q_N$ in the $N$th algorithm is lower bounded by the level $p\in[0,1]$ of depolarizing noise:
\begin{equation}
\label{eq:asymptotics}
\liminf_{N\to\infty}\frac{q_N}{N\log(1/\varepsilon_N)}~\ge~\frac{p}{8}~.
\end{equation}
More precisely, for any $p,\varepsilon\in(0,1]$ and any finite $N>9/p^2$, the number of queries $q_N$ satisfies:
\begin{equation}\label{finiteNbound}
q_N~\ge~\frac{Np\log(1/\varepsilon)}{8}\,\left[-\frac{Np^2}{9}\log\left(1-\frac{9}{Np^2}\right)\right]^{-1}~.
\end{equation}
\end{theorem}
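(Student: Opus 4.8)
The plan is to reduce the statement to the single-round bounds already established in Theorem \ref{thm8kplus1overp} and then to exploit the assumed independence of the rounds. Label the rounds by $j=1,2,\ldots$, and suppose round $j$ employs the oracle $k_j$ times in its building block followed by one verification query, so that it consumes $k_j+1$ queries and succeeds with some probability $s_j$. Since the success events are assumed independent, the overall failure probability is $\prod_j(1-s_j)$, and the requirement that the algorithm succeed with probability $1-\varepsilon_N$ reads $\prod_j(1-s_j)\le\varepsilon_N$, i.e.\ $\sum_j[-\log(1-s_j)]\ge\log(1/\varepsilon_N)$. The total query count is $q_N=\sum_j(k_j+1)$, and the goal is to lower bound it subject to this constraint.

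Two ingredients from Theorem \ref{thm8kplus1overp} do the work. First, viewing each round as an instance of Algorithm 0 with $k_j$ oracle calls, inequality (\ref{8kplus1overp}) gives $s_j\le 8(k_j+1)/(Np)$, hence $k_j+1\ge \tfrac{Np}{8}\,s_j$ and therefore $q_N\ge\tfrac{Np}{8}\sum_j s_j$. Thus it suffices to show that $\sum_j s_j$ is at least a suitable multiple of $\log(1/\varepsilon_N)$. Second, inequality (\ref{moretrivialimplication}) together with $p\le1$ (so that $1/N\le 1/(Np^2)$) yields the uniform per-round bound $s_j\le P_{\max}:=9/(Np^2)$, which is strictly below $1$ precisely under the hypothesis $N>9/p^2$.

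The remaining step converts the logarithmic constraint into a linear one. The function $x\mapsto-\log(1-x)$ is convex and vanishes at $x=0$, so on the interval $[0,P_{\max}]$ it lies below its chord, $-\log(1-x)\le \frac{-\log(1-P_{\max})}{P_{\max}}\,x$. Applying this with $x=s_j\in[0,P_{\max}]$ and summing gives
\[
\log\frac{1}{\varepsilon_N}~\le~\sum_j[-\log(1-s_j)]~\le~\frac{-\log(1-P_{\max})}{P_{\max}}\sum_j s_j~\le~\frac{-\log(1-P_{\max})}{P_{\max}}\cdot\frac{8}{Np}\,q_N~.
\]
Solving for $q_N$ and substituting $P_{\max}=9/(Np^2)$ produces exactly the finite-$N$ bound (\ref{finiteNbound}), because $\frac{P_{\max}}{-\log(1-P_{\max})}=\big[-\tfrac{Np^2}{9}\log(1-\tfrac{9}{Np^2})\big]^{-1}$.

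The asymptotic estimate (\ref{eq:asymptotics}) then follows by letting $N\to\infty$ at fixed $p$: here $P_{\max}\to0$, so the bracketed correction factor $\frac{P_{\max}}{-\log(1-P_{\max})}\to1$ (since $-\log(1-x)/x\to1$ as $x\to0$), leaving $\liminf_N q_N/(N\log(1/\varepsilon_N))\ge p/8$. I expect the main obstacle to be the convexity/chord step: replacing $-\log(1-s_j)$ by a linear multiple of $s_j$ is what makes the per-round bound (\ref{8kplus1overp}) usable, but it requires the uniform control $s_j\le P_{\max}<1$ coming from (\ref{moretrivialimplication}), and it is exactly what produces the finite-$N$ correction factor that disappears only in the limit. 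Everything else hinges on the independence assumption, which alone legitimizes factoring the failure probability as $\prod_j(1-s_j)$ and taking the logarithm additively over rounds.
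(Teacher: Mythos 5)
Your proof is correct and follows essentially the same route as the paper's: both arguments combine the independence of rounds with the two per-round bounds (\ref{moretrivialimplication}) and (\ref{8kplus1overp}) from Theorem \ref{thm8kplus1overp}, and both hinge on the same linearization $-\log(1-s_j)\le\delta_N\, s_j$ with $\delta_N=-\tfrac{Np^2}{9}\log\bigl(1-\tfrac{9}{Np^2}\bigr)$, valid because $s_j\le 9/(Np^2)<1$. The only cosmetic difference is organizational — the paper rewrites $q_N/(N\log(1/\varepsilon_N))$ as a weighted average of single-round ratios and bounds each term, whereas you sum the per-round inequalities directly — but the mathematical content is identical.
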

\begin{proof}
Suppose that the $N$th algorithm consists of $m_N$ rounds with $k^1_N,\ldots,k^m_N$ queries in each round (abbreviating $m\equiv m_N$), with failure probability $\varepsilon^i_N$ in the $i$th round. Then
$q_N=(k^1_N+1)+\ldots+(k^m_N+1)$, and by the independence condition
\begin{equation}\label{weightedaverageinproof}
\frac{q_N}{N\log(1/\varepsilon_N)}~=~\frac{(k^1_N+1)+\ldots+(k^m_N+1)}{N\log\left(1/(\varepsilon^1_N\ldots\varepsilon^m_N)\right)}~=~\frac{\frac{k^1_N+1}{N\log(1/\varepsilon^1_N)}\log(1/\varepsilon^1_N)+\ldots+\frac{k^m_N+1}{N\log(1/\varepsilon^m_N)}\log(1/\varepsilon^m_N)}{\log(1/\varepsilon^1_N)+\ldots+\log(1/\varepsilon^m_N)}~,
\end{equation}
which is a weighted average of expressions of type $\frac{k+1}{N\log(1/\varepsilon(N,k,p))}$. Lower-bounding this expression thus automatically lower-bounds (\ref{weightedaverageinproof}), and therefore it is enough to consider the $m=1$ case only.

Since, by (\ref{moretrivialimplication}), $1-\varepsilon(N,k,p)\le\frac{1}{N}+\frac{8}{Np^2}\to 0$ as $N\to\infty$ (for any $p>0$), one has $\log\left(1/\varepsilon(N,k,p)\right)\le\delta_N(1-\varepsilon(N,k,p))$ for some positive sequence $\delta_N\to1$. Using that, by (\ref{8kplus1overp}), also $1-\varepsilon(N,k,p)\le8(k+1)/(Np)$ we get 
\begin{equation}
\liminf_{N\to\infty}\frac{k+1}{N\log\left(1/\varepsilon(N,k,p)\right)}~ \ge~\liminf_{N\to\infty}\frac{k+1}{N\delta_N\cdot8(k+1)/(Np)}~=~\frac{p}{8}~.
\end{equation}

The finite-$N$ bound follows from $1-\varepsilon(N,k,p)\leq\frac{9}{Np^2}$ and setting $\delta_N$ equal to the quantity inside the square brackets in (\ref{finiteNbound}).
\end{proof}

Theorem \ref{thmsymmalgomemoryless} shows that, under any nonzero noise $p>0$ (and $\varepsilon\in(0,1)$), our algorithm from above has asymptotically optimal runtime, up to a constant factor:~In our algorithm, $\varepsilon_N\equiv\varepsilon$ was chosen independent of $N$ and (\ref{eq:firstbound}) shows that asymptotically $q_N\lsim10Np\log(1/\varepsilon)$, which matches (\ref{eq:asymptotics}) up to a factor of $80$. In the noiseless case $p=0$, our algorithm reduces to repeated Grover searches, whose optimality for $p=0$ was shown in \cite{BBBV,BBHT,Zalka}.

One other implication is noteworthy: On the one hand, Theorem \ref{thmsymmalgomemoryless} says that, at fixed positive noise $p>0$ and asymptotically for $N\to\infty$, the number of oracle queries $\gsim Np\log(1/\varepsilon)$ \emph{has} to grow at least linearly in $N$, so that the quadratic speed of noiseless Grover search is lost (at least for the class of algorithms considered above, and for the depolarizing noise model, Eq.\ \ref{partiallydepolnoise}). On the other hand, however, the prefactor in this linear growth is $O(p)$, which is actually achieved by the explicit algorithm above, see Eq.\ (\ref{eq:firstbound}); thus, for small enough noise $p$, the number of oracle calls to solve the search task by a quantum algorithm is \emph{much less} than the minimal number $\sim N(1-\varepsilon)$ of oracle calls required by any classical algorithm, even in a noiseless environment.

 In the following subsection, we will extend the above algorithm so that it works in a noisy environment even when its noise level $p$ is \emph{not} known beforehand (Algorithm \ref{algounknownpnotexcluding} and Theorem \ref{thmunknownpwithoutexcluding}).

\subsection{Fault-ignorant search composed from basic building blocks}\label{subsectionmemorylessunknown}
We are now ready to turn to the \emph{``fault-ignorant'' setting} --- the algorithm should be ignorant of the actual noise level under which the quantum computer operates. More precisely, the goal is to find an algorithm for which not the ability to give the correct result depends on the level of noise, but rather only its runtime may depend on the actual noise level. Actually, the algorithm described in the previous subsection has this property for any fixed number $k$ of oracle calls per round; however, the runtime can then become large unless $k\approx k_{opt}(N,p)$. For example, if we choose $k\approx\frac{\pi}{4}\sqrt{N}$ in order to get a quadratic speedup for $p=0$, then for $p\approx 1$ the number of oracle calls grows as fast as $N^{3/2}$, which is clearly unsatisfactory.

In order to overcome this problem we allow the number of queries to change from round to round. Thus, for each $N$ and $\varepsilon$, we need to find a sequence $k_1(N,\varepsilon),\,k_2(N,\varepsilon),\,\ldots$, where $k_i(N,\varepsilon)$ denotes the number of Grover iterations performed in the $i$th round. Again, for our constructive algorithm, we employ the usual Grover iterations; and again, Theorem \ref{thmsymmalgomemoryless} will later show that this algorithm is nearly optimal.

One idea can be as follows. In the first round, we do a Grover search with $k_1(N,\varepsilon):=k_{opt}(N,0)\approx\frac{\pi}{4}\sqrt{N}$ oracle calls (for the definition of $k_{opt}$ see below (\ref{definitionf})). For $N\gsim\varepsilon^{-2}$ this is enough to get the error probability below $\varepsilon$ as long as $p=0$; the set $\left\{p\in[0,1]\,\big|\,p_s(N,k_{opt}(N,0),p)>1-\varepsilon\right\}$ is open and therefore 
\begin{equation}
p_2~:=~\inf\left\{p\in[0,1]\,\big|\,1-p_s(N,k_1,p)\geq\varepsilon\right\}~
\end{equation}
exists and is larger than $0$ (if the set is empty, e.g.~when $\varepsilon\ll N^{-1/2}$, we set the infimum to $p_2:=0$).
Suppose that the measurement after the first round fails to find the marked element $x$. There are now two possibilities:~either the actual noise level was below $p_2$, in which case the probability of this failure was at most $\varepsilon$ (i.e.~as required); or the actual noise level exceeded $p_2$, in which case the function $k_{opt}$ gives an upper bound on the optimal number of Grover iterations to perform in the next round, so we set $k_2(N,\varepsilon):=k_{opt}(N,p_2)$. We then proceed similarly by iteratively setting $p_i:=\inf\left\{p\in[0,1]\,\big|\,\prod_{j=1}^{i-1}\left(1-p_s(N,k_j,p)\right)\ge\varepsilon\right\}$ and $k_i:=k_{opt}(N,p_i)$, giving the number of Grover iterations to be performed in the $i$th round.

\bigskip

\begin{figure}
\begin{center}
\includegraphics{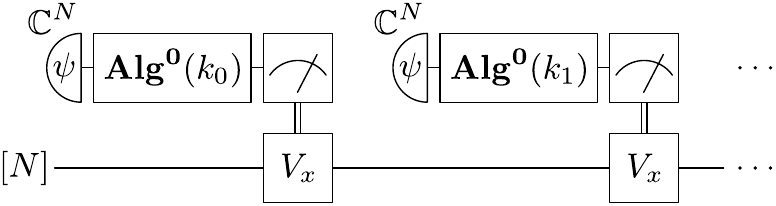}
\end{center}
\caption{\label{fig:alg1}The fault-ignorant Algorithm \ref{algounknownpnotexcluding} searches the marked element $x$ in consecutive rounds of $k_0,k_1,k_2,\ldots$ Grover steps plus one verification step each. Each round starts by preparing the equal superposition state $\psi$, which is follwed by Grover steps (a special case of Algorithm $0$, see Fig.~\ref{fig:alg0}), and a measurement in the computational basis; finally, the output is checked against the oracle (possibly by list look-up) and, in case of success, stored (by $V_x$) in a noiseless classical register with $N$ states, so that the result is ready for readout by an external agent (the algorithm may however continue). No ancillary system $\mathbb{C}^M$ is used by the algorithm ($M=1$).}
\end{figure}

The sequences $\{k_i\}_i$ obtained this way are difficult to analyze, but by examining the results of numerical simulations for various values of $N$ and $\varepsilon$ one can get an idea about their behaviour. This turns out to be enough to get an approximation which still achieves the asymptotically optimal performance, up to a multiplicative factor in the runtime (see below). Specifically, we arrive at the following algorithm (see also Fig.~\ref{fig:alg1}) for \emph{fault-ignorant quantum search}:
\begin{algorithm}[Quantum search from basic building blocks]\label{algounknownpnotexcluding}
For suitably chosen $c>0$, define
\begin{equation}\label{alphapresciptalg1}
\alpha_g(\varepsilon)~:=~\frac{1}{\sqrt{1+\frac{g}{c\log(1/\varepsilon)}}}~.
\end{equation}
Repeat the following steps for $g=0,1,2,\ldots$:
\begin{enumerate}
\item prepare the equal superposition state $\ket{\psi}=\frac{1}{\sqrt{N}}\sum_{x=1}^N\ket{x}$ on a quantum register $\mathbb{C}^N$,
\item perform $k_g:=\left\lfloor\alpha_g(\varepsilon)\frac{\pi}{4}\sqrt{N}\right\rfloor$ Grover iterations,
\item measure in the standard basis, verify the result using one oracle invocation; if correct then store in classical output register.
\end{enumerate}
\end{algorithm}

The following theorem proves that Algorithm~\ref{algounknownpnotexcluding} is fault-ignorant, i.e.~finds the marked element independently of the actual noise level (in particular, the algorithm does not need $p$ as an input), and gives a bound on its runtime which, however, depends on the actual noise level.
\begin{theorem}[Fault-ignorance of Algorithm \ref{algounknownpnotexcluding}]\label{thmunknownpwithoutexcluding}
Let $p\in[0,1]$ be the actual noise level (i.e.\ noise $D_p(\varrho):=(1-p)\varrho+pT(\varrho)$ with any quantum channel $T$) acting on the quantum register when executing Algorithm~\ref{algounknownpnotexcluding}, and let $N\ge100$ and $\varepsilon\in(0,1/2]$. Then Algorithm~\ref{algounknownpnotexcluding}, with $c=10$, finds the marked element after at most
\begin{equation}\label{stepsinfirstfaulttolerantalgo}
100\left(Np+\sqrt{N}\right)\log\frac{1}{\varepsilon}
\end{equation}
oracle queries except with probability at most $\varepsilon$.
\end{theorem}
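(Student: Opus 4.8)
The plan is to analyze Algorithm~\ref{algounknownpnotexcluding} round by round and reduce the claim to a single scalar inequality. Because each round re-prepares the fixed state $\ket{\psi}$, and the concluding verification is performed on the \emph{classical} guess (a table look-up, which is noiseless), the events ``round $g$ outputs the correct $x$'' are independent across $g$, with per-round success probability $p_s(N,k_g,p)$ and per-round cost $k_g+1$ oracle calls. Hence the probability of failing in all of the first $G$ rounds is $\prod_{g=0}^{G-1}\big(1-p_s(N,k_g,p)\big)\le\exp\big(-\sum_{g=0}^{G-1}p_s(N,k_g,p)\big)$. So it suffices to exhibit, for every $p\in[0,1]$, a round index $G=G(p)$ for which $\sum_{g=0}^{G-1}p_s(N,k_g,p)\ge\log(1/\varepsilon)$ (this already forces the failure probability below $\varepsilon$) while the cumulative number of queries $\sum_{g=0}^{G-1}(k_g+1)$ stays below $100(Np+\sqrt N)\log\tfrac1\varepsilon$.

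First I would record a clean per-round lower bound. Starting from \eqref{lowerboundprobs}, $p_s(N,k_g,p)\ge(1-p)^{k_g}\sin^2\!\big((2k_g+1)\arcsin\tfrac{1}{\sqrt N}\big)$, I would use $\arcsin\tfrac1{\sqrt N}\ge\tfrac1{\sqrt N}$ together with $\sin t\ge\tfrac2\pi t$ on $[0,\pi/2]$ to obtain $p_s(N,k_g,p)\ge(1-p)^{k_g}\tfrac{4}{\pi^2}\tfrac{(2k_g+1)^2}{N}$; since $k_g=\lfloor\alpha_g\tfrac\pi4\sqrt N\rfloor$ this is $\gtrsim(1-p)^{k_g}\alpha_g^2$, and at $k_g=0$ the exact value $p_s\ge1/N$ is available directly. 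The prescription \eqref{alphapresciptalg1} is tailored to this bound: writing $L:=c\log(1/\varepsilon)$ one has $\alpha_g^2=L/(L+g)$, so the cumulative ``noiseless'' success grows like $\sum_{g<G}\alpha_g^2\approx L\log(1+G/L)$, while the cumulative query count grows like $\sum_{g<G}k_g\approx\tfrac\pi2\sqrt{NL}\,\sqrt{L+G}$. These two growth laws are the engine of the whole estimate.

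I would then split into three noise regimes and in each choose $G$ so that the success sum reaches $\log(1/\varepsilon)$. (i) For $p\le\tfrac4{\pi\sqrt N}$ every round has $(1-p)^{k_g}\ge(1-p)^{\pi\sqrt N/4}\ge1/4$ (using $N\ge100$, $\varepsilon\le1/2$), so $\sum_{g<G}\tfrac14\alpha_g^2\ge\log(1/\varepsilon)$ already for $G=O(L)$, at cost $\tfrac\pi2\sqrt{NL}\sqrt{L+G}=O(\sqrt N\log\tfrac1\varepsilon)$. (ii) For $\tfrac4{\pi\sqrt N}<p\le\tfrac12$ I restrict to rounds $g\ge g_p$ with $k_g\le1/p$, i.e.\ $L+g_p\approx L\,\tfrac{\pi^2p^2N}{16}$; there $(1-p)^{k_g}\ge(1-p)^{1/p}\ge1/4$, and accumulating $\sum\tfrac14\alpha_g^2$ from $g_p$ to $G$ with $L+G\approx e^{4/c}(L+g_p)$ reaches $\log(1/\varepsilon)$; the cost $\tfrac\pi2\sqrt{NL}\sqrt{L+G}$ then evaluates to a constant times $Np\log\tfrac1\varepsilon$, and the extra $G$ verification steps add at most $O(p^2N\log\tfrac1\varepsilon)\le O(Np\log\tfrac1\varepsilon)$. (iii) For $\tfrac12<p\le1$ I use the rounds with $k_g=0$ (reached once $g\gtrsim L\tfrac{\pi^2N}{16}$), each contributing $p_s\ge1/N$; doing $\sim N\log\tfrac1\varepsilon$ of them yields the required sum, and the whole cost is $O(N\log\tfrac1\varepsilon)\le O(Np\log\tfrac1\varepsilon)$. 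With $c=10$ one verifies that in each regime the cumulative count lies below $100(Np+\sqrt N)\log\tfrac1\varepsilon$, the $\sqrt N$ term dominating for small $p$ and the $Np$ term for large $p$.

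The main obstacle I anticipate is not the structure but the constant bookkeeping needed to make a single prefactor work across all three regimes at once. Concretely, I must (a) justify the use of $\sin t\ge\tfrac2\pi t$ by checking that $(2k_g+1)\arcsin\tfrac1{\sqrt N}$ stays within $[0,\pi/2]$ for the contributing rounds (the borderline round $g=0$, where the argument marginally exceeds $\pi/2$ but $\sin^2$ is near $1$, can simply be dropped from the success sum since all terms are nonnegative), (b) absorb the floor in $k_g=\lfloor\alpha_g\tfrac\pi4\sqrt N\rfloor$ and the additive $1$'s without spoiling the clean $\alpha_g^2$ dependence, and (c) bound the partial sums $\sum\tfrac1{\sqrt{L+g}}\le2\sqrt{L+G}$ and $\sum\tfrac1{L+g}\ge\log\tfrac{L+G}{L+g_p}$ with explicit constants, so that the deliberately non-tight numbers $c=10$ and prefactor $100$ (cf.\ the remarks around \eqref{eq:firstbound}) indeed dominate the worst regime, namely~(ii).
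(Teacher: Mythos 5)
Your proposal follows essentially the same route as the paper's own proof (Appendix C): independence of rounds yielding $p_{\mathrm{fail}}\le\exp\{-\sum_g p_s(N,k_g,p)\}$, the linearized-sine lower bound on each round's success probability, integral comparisons for both the success sum and the query count, and the same three-regime split in $p$ with windows $[g_*,g^*]$ of rounds chosen exactly as the paper chooses them (first rounds for small $p$, rounds with $k_g\lesssim 1/p$ in the intermediate regime, rounds with $k_g=0$ for large $p$). The one caveat, which your item (a) already anticipates, is that for small $\varepsilon$ the rounds whose argument $(2k_g+1)\arcsin(1/\sqrt N)$ exceeds $\pi/2$ number about $0.15\,c\log(1/\varepsilon)$ rather than just $g=0$, so you must either drop that whole initial block (harmless, as it only shifts constants) or use the paper's anchored bound $\sin x\ge x\sin(x_0)/x_0$ valid for $0\le x\le x_0\le\pi$.
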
If one wants Algorithm \ref{algounknownpnotexcluding} to be fault-ignorant only for noise levels $p\in[0,0.1]$, and if one presupposes reasonable values $N\geq1000$ and $\varepsilon\in(0,0.1]$, then one can guarantee the constant prefactor to be $20$ instead of $100$ as in (\ref{stepsinfirstfaulttolerantalgo}), when using $c=4.5$; cf.\ Appendix \ref{mainthm}.

As the proof of Theorem \ref{thmunknownpwithoutexcluding} is rather technical, we present here only a sketch; for details, see Appendix \ref{mainthm}.
\begin{proof}[Sketch of the proof]As the noise acts symmetrically with respect to the different oracles and since the Grover steps of Algorithm~\ref{algounknownpnotexcluding} are symmetric as well, the success events for different rounds $g$ are independent, so that we will be able to upper bound the failure probability after round $g^*$:
\begin{equation}\label{failprobabinfirstlowerbound}
\fail~=~\prod_{g=0}^{g^*}\left(1-p_s(N,k_g,p)\right)~\le~\exp\left\{-\sum_{g=0}^{g^*}(1-p)^{k_g}\sin^2\left((2k_g+1)\arcsin\frac{1}{\sqrt{N}}\right)\right\}~.
\end{equation}
When the sum $\sum_{g=0}^{g^*}$ in the last expression is greater than $\log(1/\varepsilon)$ then we can guarantee the failure probability to be at most $\varepsilon$, as desired. To get the statement about the number of oracle calls, we upper bound it by
\begin{equation}\label{numoforaclecallsfirstlowerbound}
\sum_{g=0}^{g^*}(k_g+1)~
\le~g^*+1+\frac{\pi}{4}\sqrt{N}+\frac{\pi}{2}\sqrt{N}\left(c\log\frac{1}{\varepsilon}\right)\sqrt{1+\frac{g^*}{c\log(1/\varepsilon)}}~.
\end{equation}

The proof of Theorem \ref{thmunknownpwithoutexcluding} now consists in showing that there exists a number $g^*$ (of rounds) such that the failure probability (\ref{failprobabinfirstlowerbound}) is at most $\varepsilon$, while the number of oracle calls (\ref{numoforaclecallsfirstlowerbound}) is at most (\ref{stepsinfirstfaulttolerantalgo}). Similar to our analysis leading up to (\ref{eq:firstbound}), this argument will be split into three different cases: for $p\le\pi/(4\sqrt{N})$ the first few rounds ($g=0,1,2,\ldots$) are the important ones; for $p\ge 0.3$ we take into account only the rounds with $k_g=0$ (i.e.~large $g$); and for $\pi/(4\sqrt{N})\le p\le 0.3$ our proof relies on an intermediate regime of $g$. Details are given in Appendix \ref{mainthm}.
\end{proof}

Theorem \ref{thmsymmalgomemoryless} actually shows that the runtime of Algorithm \ref{algounknownpnotexcluding} (which we just proved to be at most Eq.\ (\ref{stepsinfirstfaulttolerantalgo})) is \emph{optimal up to a constant}: To see this, note that Algorithm \ref{algounknownpnotexcluding} is contained in the class of algorithms to which the bound from Theorem \ref{thmsymmalgomemoryless} on the number of oracle queries $q_N\gsim\frac{1}{8}Np\log\frac{1}{\varepsilon}$ applies. For any fixed noise level $p>0$ and up to a constant factor, this equals the upper bound (\ref{stepsinfirstfaulttolerantalgo}) on the number of queries needed by Algorithm \ref{algounknownpnotexcluding}. In particular, even if one does \emph{not} know the actual noise level in advance, one only loses a constant factor in the number of queries, compared to the runtime in case of known $p$ (given in Eq.\ (\ref{eq:firstbound})).

We re-emphasize here the last point from Subsection \ref{subsectrepeatbuildingblockknownp}, that for small enough but constant noise levels $p$ and in the limit $N\to\infty$, the quantum Algorithm \ref{algounknownpnotexcluding} needs \emph{fewer} oracle calls than even the best classical algorithm in a noiseless environment.

As the runtime depends on an unknown parameter, it is necessary to have the ability to stop the algorithm as soon as the result is found. Theorem \ref{thmunknownpwithoutexcluding} then states a ``probabilistic bound'' on the number of oracle uses up to the point when the element is found; this bound is probabilistic in the sense that in a fraction of at most $\varepsilon$ of all runs of Algorithm \ref{algounknownpnotexcluding}, the actual runtime may exceed this bound.

When considering more general algorithms, namely for which the events of failure in different rounds are not independent, the derivation of the lower bound on the necessary number of queries from Theorem \ref{thmsymmalgomemoryless} is no longer valid. This dependency can arise either from asymmetric noise or from an asymmetry in the algorithm itself. Indeed, it is useful to consider such ``asymmetric algorithms'':~already classically one can find the marked element using $\lceil N(1-\varepsilon)\rceil$ queries, except with error probability $\varepsilon$, by simply testing a subset of $\lceil N(1-\varepsilon)\rceil$ elements using one oracle call each. This feature of not considering previously falsified items again is absent from Algorithm \ref{algounknownpnotexcluding} whose runtime may therefore exceed that of classical search, through the factor $\log(1/\varepsilon)$ in Theorem \ref{thmsymmalgomemoryless}.

The algorithms considered in Section \ref{sec:withmemory} will make use of this asymmetry, which can also be conceived of as conditioning the actions in future rounds on previous measurement outcomes that are being stored in a classical memory. This will be done by incorporating a noiseless classical memory which we will allow the algorithms to use in a limited way, namely by excluding oracle indices that have been falsified in previous rounds.

\section{Search algorithms employing noiseless classical memory}\label{sec:withmemory}

\subsection{Search with exclusion}\label{firstsearchwithexclusionsubsection}
Classical search algorithms can find the marked element with maximal failure probability $\varepsilon$ using at most $\lceil N(1-\varepsilon)\rceil$ steps, by excluding falsified oracle indices. Here we aim to achieve an upper bound of $N$ on the runtime --- independently of $\varepsilon$ and of $p$ --- for our quantum algorithms as well, whereas in Section \ref{sec:memoryless} we have only presented algorithms whose runtime may exceed $N$ parametrically due to the factor $\log(1/\varepsilon)$, e.g.~in (\ref{stepsinfirstfaulttolerantalgo}).

On a quantum computer a random choice may be implemented by preparing the equal superposition state $\ket{\psi}$ over a subset of basis states followed by a measurement in that basis. This in turn can be viewed as a Grover search with zero iterations (cf.~Subsection \ref{setupnoiselessandmodels}). This leads to the idea of replacing the uniform random choices by proper Grover searches (each including several Grover steps plus a concluding measurement) over decreasing subsets, i.e.~$\{1,\ldots,N\}\setminus\{i_1,\ldots,i_{m'}\}$ after round $m'$. For this, the classical noiseless memory that previously stored only the correct search outcome, will be exanded by a register of $2^N$ states ($N$ bits) to mark the previously excluded items.

We shall now develop and sketch a search algorithm based on this idea of excluding previously tested elements; the following procedure is applicable if the noise level $p$ is known beforehand. If one fixes the number $N$ of database entries, the noise parameter $p$ and the target error bound $\varepsilon$, then the question is how to choose the number of iterations in each round in order to consume the least number of queries. Suppose that in the $i$th round we perform $k_i$ queries and we do $m$ rounds in total. Then the number of queries is $\sum_{i=1}^{m}(k_i+1)$, while the probability of not finding the marked element is at most $\prod_{i=1}^m\left(1-p_s(N-i+1,k_i,p)\right)$; thus, the minimal number of queries for which one can guarantee success (up to failure probability $\varepsilon$) in the general noise model is
\begin{equation}
\min\left\{\sum_{i=1}^{m}(k_i+1)\,\Bigg|\,m\in\mathbb{N},\,i_1,\ldots,i_m\in\mathbb{N},\,\prod_{i=1}^m\left(1-p_s(N-i+1,k_i,p\right))\le\varepsilon\right\}\label{mindynprogramming}~,
\end{equation}
e.g.~letting $p_s(N,k,p)\equiv(1-p)^k\sin^2((2k+1)\arcsin(1/\sqrt{N}))$ equal the lower bound in (\ref{lowerboundprobs}) (alternatively, (\ref{exactsuccessprobabfordepol})). For given $(N,p,\varepsilon)$, the minimum (\ref{mindynprogramming}) and the corresponding sequence $\{k_i\}$ of Grover steps can be found by dynamic programming. Clearly, $\lceil N(1-\varepsilon)\rceil$ is an upper bound on the number of oracle calls and for any fixed $p_0>0$ we can bound this as $\lceil N(1-\varepsilon)\rceil\leq Np(1-\varepsilon)/p_0$ for all $p\geq p_0$. Similarly, by (\ref{eq:firstbound}) or Theorem \ref{thmunknownpwithoutexcluding}, for $\varepsilon\geq\varepsilon_0>0$ there is also an upper bound of the form $cNp(1-\varepsilon)$, since $\log(1/\varepsilon)\leq c(1-\varepsilon)$ where $c$ is determined by $\varepsilon_0$. Hence, an upper bound on the runtime of the form $c'Np(1-\varepsilon)$ holds for the complement of any neighbourhood of $(p,\varepsilon)=(0,0)\in[0,1]^2$, at least asymptotically for $N\to\infty$.

In following subsection we simplify the above algorithm, based on typical behaviour of the sequences $\{k_i\}_i$ found in numerical experiments.

\subsection{Fault-ignorant quantum search with exclusion}\label{subsectionfaultignorantwithexclusion}
In this subsection we present a more explicit algorithm to solve the search problem in the \emph{fault-ignorant setting}, i.e.~an algorithm which can be specified and works even for \emph{unknown} noise level $p$, using the exclusion described above to obtain faster runtime (cf.~also Fig.~\ref{fig:alg2}):
\begin{algorithm}[Quantum search with exclusion]\label{algounknownpexcluding}
For suitably chosen $c>0$, define $S_0:=\{1,\ldots,N\}$ and
\begin{equation}\label{alphaprescriptalg2}
\alpha_g(\varepsilon)~:=~\frac{1}{\sqrt{1+\frac{g}{c\log(1/\varepsilon)}}}~.
\end{equation}
Repeat the following steps for $g=0,1,2,\ldots$:
\begin{enumerate}
\item prepare the equal superposition state $\psi_g$ over the set $S_g$,
\item perform $k_g$ Grover iterations (with $I-2\ketbra{\psi_g}{\psi_g}$ as reflection), where
\begin{equation}\label{selectstep2ofalgo2}
k_g~:=~\left\{\begin{array}{ll}
\displaystyle\left\lfloor\alpha_g(\varepsilon)\frac{\pi}{4}\sqrt{N-g}\right\rfloor~, &~~~\text{if }k_0+k_1+\cdots+k_{g-1}+g\le (1-\varepsilon)N~, \\
0~, &~~~\text{otherwise}~,
\end{array}\right.
\end{equation}
\item measure in the standard basis, verify the result $r_g$ using one oracle invocation, store if correct,
\item let $S_{g+1}:=S_g\setminus\{r_g\}$.
\end{enumerate}
\end{algorithm}

\begin{figure}
\begin{center}
\includegraphics{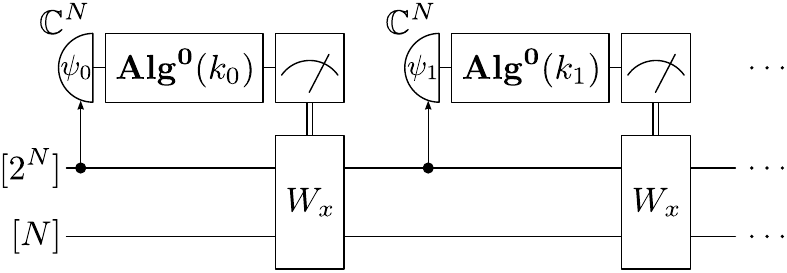}
\end{center}
\caption{\label{fig:alg2}Algorithm \ref{algounknownpexcluding} uses exclusion in searching for an element in consecutive rounds of $k_0,k_1,k_2,\ldots$ Grover steps each, supplemented by one verification query. Each round starts by preparing the equal superposition state $\psi_g$ of the previously not excluded elements, noted in the classical memory $[2^N]$, and is concluded by a measurement in the computational basis. The output is then verified against the oracle (list look-up) and stored in the classical noiseless memory $[N]$ if the element is found and marked in the memory $[2^N]$ if the round was unsuccessful ($W_x$).}
\end{figure}

Similarly as Theorem \ref{thmunknownpwithoutexcluding} for Algorithm \ref{algounknownpnotexcluding}, the following theorem proves fault-ignorance of Algorithm \ref{algounknownpexcluding} and provides a bound on its runtime:
\begin{theorem}[Fault-ignorance of Algorithm \ref{algounknownpexcluding}]\label{thmknownpwithoutexcluding}
Let $p\in[0,1]$ be the actual noise level (i.e.\ noise $D_p(\varrho):=(1-p)\varrho+pT(\varrho)$ with any quantum channel $T$) acting on the quantum register when executing Algorithm \ref{algounknownpexcluding}, and let $N\geq100$ and $\varepsilon\in(0,1/2]$. Then Algorithm \ref{algounknownpexcluding}, with $c=10$, finds the marked element after at most
\begin{equation}
\label{faultignorantbound}
\min\left\{100\left(Np+\sqrt{N}\right)\log\frac{1}{\varepsilon},\,\,2(1-\varepsilon)N+\sqrt{N}\right\}
\end{equation}
oracle queries except with probability at most $\varepsilon$.
\end{theorem}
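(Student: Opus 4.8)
The plan is to prove Theorem \ref{thmknownpwithoutexcluding} by splitting the bound into its two constituent parts and handling each separately. The minimum in \eqref{faultignorantbound} suggests that one term dominates in different parameter regimes, so I would establish each inequality independently: first that the failure probability drops below $\varepsilon$ once the number of queries reaches $100(Np+\sqrt{N})\log\frac{1}{\varepsilon}$, and second that the runtime never exceeds $2(1-\varepsilon)N+\sqrt{N}$ by construction.

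\emph{First part.} For the bound $100(Np+\sqrt{N})\log\frac{1}{\varepsilon}$, I would mirror the structure of the proof of Theorem \ref{thmunknownpwithoutexcluding} (sketched above and detailed in Appendix \ref{mainthm}). The key observation is that $k_g$ in \eqref{selectstep2ofalgo2} uses $\sqrt{N-g}\ge\sqrt{N}-\text{(small correction)}$ and the search is over a set $S_g$ of size $N-g$, which is \emph{at least as favorable} as the non-exclusion case since the per-round success probability $p_s(N-g,k_g,p)$ only increases as the search space shrinks. By the lower bound \eqref{lowerboundprobs}, the failure probability after round $g^*$ is bounded by $\exp\{-\sum_{g=0}^{g^*}(1-p)^{k_g}\sin^2((2k_g+1)\arcsin\frac{1}{\sqrt{N-g}})\}$. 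Since $N-g\le N$ makes each sine term no smaller than its $N$-valued analog, the same three-regime analysis ($p\le\pi/(4\sqrt{N})$, intermediate $p$, and $p\ge0.3$) used for Algorithm \ref{algounknownpnotexcluding} carries over, and the first term of \eqref{faultignorantbound} follows with the \emph{same} constant $c=10$ and prefactor $100$.

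\emph{Second part.} The bound $2(1-\varepsilon)N+\sqrt{N}$ follows almost directly from the cutoff built into \eqref{selectstep2ofalgo2}. The branching condition ensures that the algorithm performs genuine Grover iterations ($k_g>0$) only while $k_0+\cdots+k_{g-1}+g\le(1-\varepsilon)N$; thereafter it switches to $k_g=0$, i.e.\ pure verification of single elements from the shrinking set $S_g$. Once in the $k_g=0$ regime, each round tests exactly one previously-unfalsified element using one oracle call, so after at most $\lceil(1-\varepsilon)N\rceil$ such tests the marked element is found with certainty conditioned on it lying among the tested items --- and this happens except with probability $\varepsilon$, exactly as in classical search with exclusion. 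The total query count is then the sum accumulated during the Grover phase (bounded by $(1-\varepsilon)N$ plus the verification overhead) together with the additional $\sqrt{N}$ slack coming from the final incomplete round's $k_g\le\frac{\pi}{4}\sqrt{N}$; combining these yields $2(1-\varepsilon)N+\sqrt{N}$.

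\emph{Main obstacle.} The hard part will be verifying that the exclusion mechanism does not break the \emph{independence} of success events across rounds --- a subtlety absent from Algorithm \ref{algounknownpnotexcluding}. Because $S_{g+1}=S_g\setminus\{r_g\}$ depends on the measurement outcome $r_g$, the rounds are genuinely correlated, so one cannot simply multiply per-round failure probabilities as in \eqref{failprobabinfirstlowerbound}. The resolution is to condition carefully: in any round where the marked element $x$ still lies in $S_g$, the symmetric Grover search yields a success probability that is bounded below by \eqref{lowerboundprobs} with $N$ replaced by $|S_g|$, and crucially $x$ is removed from $S_g$ \emph{only} upon a successful verification (never erroneously), so the event ``$x\in S_g$'' is preserved until success. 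I would therefore track the conditional failure probability given $x\in S_g$, show it telescopes correctly, and confirm that the cutoff in \eqref{selectstep2ofalgo2} guarantees all $N$ items get tested in the worst case. Reconciling the Grover-phase accounting with the verification-phase accounting at the crossover point --- ensuring the two regimes' query counts add up to at most $2(1-\varepsilon)N+\sqrt{N}$ without double-counting --- is the step requiring the most care.
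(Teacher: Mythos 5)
Your proposal is correct and takes essentially the same route as the paper's proof: the first term of (\ref{faultignorantbound}) is inherited from Theorem \ref{thmunknownpwithoutexcluding} because exclusion can only increase each round's success probability, and the second term follows from the cutoff in (\ref{selectstep2ofalgo2}), which forces a switch to classical test-and-exclude after at most $(1-\varepsilon)N+(\pi/4)\sqrt{N}+1$ queries, with at most $(1-\varepsilon)N+1$ further queries needed afterwards. Your ``main obstacle'' about multiplicativity of failure probabilities across correlated rounds is handled in the paper exactly as you resolve it --- by symmetry of the Grover steps and the noise, the conditional success probability given previous failures is $p_s(N-g,k_g,p)$, so the product formula holds.
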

\begin{proof}The success probability in each round of Algorithm \ref{algounknownpexcluding} is at least as large as in Algorithm \ref{algounknownpnotexcluding} because we are excluding elements; thus, the runtime of Algorithm \ref{algounknownpnotexcluding} puts an upper bound on the runtime by (\ref{stepsinfirstfaulttolerantalgo}). Before that, however, Algorithm \ref{thmunknownpwithoutexcluding} may switch to testing (and excluding) elements in random order (the second selector in (\ref{selectstep2ofalgo2})); this switch happens after at most $(1-\varepsilon)N+(\pi/4)\sqrt{N}+1$ oracle calls, and after the switch Algorithm \ref{algounknownpexcluding} needs at most $(1-\varepsilon)N+1$ additional calls to find the marked element except with probability $\varepsilon$.\end{proof}

The constant $100$ in (\ref{faultignorantbound}) can be improved to $20$ for a restricted range of parameters, following the remark below Theorem \ref{thmunknownpwithoutexcluding}.

Again, similar to Theorem \ref{thmsymmalgomemoryless}, we can show that, despite restricting to Grover's specific steps, Algorithm \ref{algounknownpexcluding} is essentially optimal within a wider class of algorithms. Namely, we extend the class of algorithms considered in and before Theorem \ref{thmsymmalgomemoryless} in such a way that, instead of requiring independence of the failure probabilites in different rounds, we assume that after each unsuccesful round we exclude the tested element and thereby reduce the search space as well as the state space of the computation. We also need to ensure that failure probabilities are multiplicative, which is the case e.g.~if both the noise and the algorithm treat the search elements uniformly (this in particular applies to Grover iterations and partial depolarizing noise, see Eq.\ \ref{partiallydepolnoise}).

This wider class of algorithms is qualitatively different from the algorithms considered in Theorem \ref{thmsymmalgomemoryless}, as it now contains algorithms succeeding with $(1-\varepsilon)N\le N$ oracle calls, \emph{independently} of $p$, for example the classical verification-and-exclusion algorithm described at the beginning of Sec.\ \ref{firstsearchwithexclusionsubsection}. This is reflected in the fact that the lower bound in the following theorem never exceeds $N$, unlike the bounds on $q_N$ in the memoryless setting from Theorem \ref{thmsymmalgomemoryless}.
\begin{theorem}[Lower runtime bound on exclusion algorithms]\label{thmlowerboundwithmemory}For any quantum search algorithm (that may or may not have the noise level $p$ as an input) satisfying the above constraints and whose quantum register is subject to depolarizing noise (see Eq.\ (\ref{partiallydepolnoise})) with fixed strength $p\in(0,1)$, the number $q_N$ of queries to find the marked element up to fixed failure probability $\varepsilon\in(0,1)$ is lower bounded as
\begin{equation}\label{eqninmemorythmlowerbound}
q_N~\ge~\frac{N}{1+\frac{8C_N}{p\log(1/\varepsilon)}}~,
\end{equation}
for some sequence $C_N=C_N(\varepsilon,p)$ with $\lim_{N\to\infty}C_N=1$.
\end{theorem}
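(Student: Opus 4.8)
The plan is to reduce the lower bound for exclusion algorithms to the already-established building-block bound of Theorem~\ref{thm8kplus1overp}, exploiting the multiplicativity of failure probabilities across rounds that is assumed in the hypothesis. Suppose the algorithm consists of $m$ rounds, with $k_i$ queries in the $i$th round (plus one verification query), and that after each unsuccessful round the tested element is excluded, so the $i$th round effectively searches a space of size $N-i+1$. By the multiplicativity assumption, the overall success probability $1-\varepsilon$ factorizes as $\prod_{i=1}^m(1-\varepsilon_i)$, where $1-\varepsilon_i$ is the per-round success probability of round $i$ operating on a search space of size $N-i+1$ under noise $p$. The total query count is $q_N=\sum_{i=1}^m(k_i+1)$.

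The key step is to apply inequality~(\ref{8kplus1overp}) from Theorem~\ref{thm8kplus1overp} to each round individually. Because round $i$ is itself an Algorithm~0 building block acting on a space of dimension $N-i+1$, the per-round success probability obeys $1-\varepsilon_i\le 8(k_i+1)/((N-i+1)p)$. Summing $\log(1/\varepsilon)=\sum_i\log(1/\varepsilon_i)\ge\sum_i(1-\varepsilon_i)$ and combining with the per-round bound would relate $\log(1/\varepsilon)$ to $\sum_i (k_i+1)/((N-i+1)p)$. The difficulty is that the denominators $N-i+1$ shrink as rounds progress, so the bound cannot simply be pulled through a single factor of $N$ as in the proof of Theorem~\ref{thmsymmalgomemoryless}; instead one must account for the decreasing search space. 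First I would bound $N-i+1\ge N-q_N$ (or a similar estimate, since $i\le m\le q_N$), so that
\begin{equation}
\log\frac{1}{\varepsilon}~\ge~\sum_{i=1}^m(1-\varepsilon_i)~\ge~\frac{p}{8}\sum_{i=1}^m\frac{k_i+1}{N-i+1}~\ge~\frac{p}{8}\,\frac{q_N}{N}~,
\end{equation}
which already gives a bound but of the wrong shape. To obtain the self-referential bound~(\ref{eqninmemorythmlowerbound}) with the factor $N/(1+8C_N/(p\log(1/\varepsilon)))$, one should be more careful and keep the dependence $N-i+1$ rather than crudely replacing it by $N$.

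The main obstacle I anticipate is extracting the precise finite-$N$ constant $C_N\to1$ and the exact algebraic form of~(\ref{eqninmemorythmlowerbound}). My plan here is to note that since at most $q_N$ rounds occur, and since the harmonic-type sum $\sum_{i=1}^m 1/(N-i+1)$ can be estimated, one controls how much the shrinking denominators inflate the bound. Concretely, I expect the argument to proceed by establishing
\begin{equation}
\log\frac{1}{\varepsilon}~\ge~\frac{p}{8}\sum_{i=1}^m\frac{k_i+1}{N-i+1}~\ge~\frac{p}{8}\,\frac{q_N}{N}\,\frac{1}{C_N}~,
\end{equation}
for a correction factor $C_N$ capturing the replacement of the per-round denominators $N-i+1$ by $N$; one then solves for $q_N$. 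The subtle point requiring care is that the bound~(\ref{8kplus1overp}) is only useful when applied to rounds with $k_i\ge 1$, whereas the rounds with $k_i=0$ (pure classical verification on excluded subsets) contribute success probability exactly $1/(N-i+1)$ and must be handled separately but consistently; fortunately these also satisfy $1-\varepsilon_i=1/(N-i+1)\le 8(k_i+1)/((N-i+1)p)$ for $p\le 1$, so the same per-round estimate applies uniformly. Finally I would verify $\lim_{N\to\infty}C_N=1$, which should follow because for the leading behaviour only a vanishing fraction of the full search space $N$ is excluded before the bound saturates, making the shrinkage correction negligible asymptotically.
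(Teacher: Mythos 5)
Your overall skeleton is the same as the paper's: decompose into rounds, apply the building-block bounds of Theorem \ref{thm8kplus1overp} to round $i$ on a search space of size $N-i+1$, use multiplicativity of failure probabilities, keep track of the shrinking denominators, and solve a self-referential inequality for $q_N$. However, the central chain of inequalities is inverted, and this is a genuine gap, not a cosmetic one. You correctly state the per-round bound $1-\varepsilon_i\le 8(k_i+1)/((N-i+1)p)$, which is an \emph{upper} bound on the per-round success probability; but your displayed chain then asserts $\sum_i(1-\varepsilon_i)\ge\frac{p}{8}\sum_i\frac{k_i+1}{N-i+1}$, a \emph{lower} bound on success probabilities. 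No such lower bound is available (Theorem \ref{thm8kplus1overp} only limits how well an algorithm can do, and a lower bound is false in general: an algorithm may waste queries), so this step cannot be repaired as written. The same directional problem occurs in your use of the elementary inequality: you invoke $\log(1/\varepsilon_i)\ge 1-\varepsilon_i$, whereas the argument needs the opposite-direction estimate $\log(1/\varepsilon_i)=-\log\bigl(1-(1-\varepsilon_i)\bigr)\le\frac{1-\varepsilon_i}{\varepsilon_i}$, i.e.\ $\log(1-x)\ge -x/(1-x)$. Finally, even if one accepted your displayed inequalities at face value, $\log(1/\varepsilon)\ge\frac{p}{8}\frac{q_N}{N C_N}$ rearranges to $q_N\le\frac{8NC_N}{p}\log(1/\varepsilon)$, which is an \emph{upper} bound on $q_N$ --- the wrong conclusion entirely; the theorem requires a lower bound.

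The correct route (the paper's) is: from $\varepsilon\ge\prod_i\varepsilon_i$ get $\log(1/\varepsilon)\le\sum_i\log(1/\varepsilon_i)\le\sum_i\frac{1-\varepsilon_i}{\varepsilon_i}$; then upper-bound each numerator $1-\varepsilon_i$ by (\ref{8kplus1overp}) with denominator $N-i+1\ge N-q_N$, and control each factor $1/\varepsilon_i$ using the $k$-independent bound (\ref{moretrivialimplication}), which gives $1-\varepsilon_i\le(1+8/p^2)/(N-q_N)$, hence $1/\varepsilon_i\le C_N:=\bigl[1-\frac{1}{N-q_N}\bigl(1+\frac{8}{p^2}\bigr)\bigr]^{-1}$. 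This is where $C_N$ actually comes from --- not, as you conjecture, from replacing the harmonic-type denominators $N-i+1$ by $N$ (that replacement is handled exactly by keeping $N-q_N$, which is what produces the saturating form of (\ref{eqninmemorythmlowerbound})). One obtains $\log(1/\varepsilon)\le\frac{8C_N}{p}\frac{q_N}{N-q_N}$, which solves to precisely (\ref{eqninmemorythmlowerbound}). Note also that your proposal omits two ingredients that are needed: the bound (\ref{moretrivialimplication}) (you only ever invoke (\ref{8kplus1overp})), and the a priori reduction $\limsup_N q_N/N\le 1-\varepsilon$ (legitimate because an algorithm achieving this exists), without which one cannot guarantee $N-q_N\to\infty$ and hence $C_N\to1$.
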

\begin{proof}
We can assume that $\limsup_{N\to\infty}q_N/N\le1-\varepsilon$, because there does exist an algorithm with this limit being $1-\varepsilon$ (see above). For now, fix $N$, and by letting $N\to\infty$ later we will make sure that all following expressions are well-defined (e.g.~no logarithms of negative arguments occur, etc.).

Let the number of queries in round $g$ ($0\le g\le r$) be $k_g+1$ (i.e.~including the verification-exclusion step), so $r<q_N=\sum_{g=0}^r(k_g+1)$. For the success probabilty in round $g$ we have by (\ref{moretrivialimplication}) and (\ref{8kplus1overp})
\begin{equation}
p_s(N-g,k_g,p)~\le~\frac{h(p,k_g)}{N-g}~\quad~\text{with}~\quad~h(p,k)~\le~1+\frac{8}{p^2}~\quad\text{and}~\quad h(p,k_g)~\le~\frac{8(k_g+1)}{p},
\end{equation}
where the latter inequality implies that $h(p,k)$ is bounded independently of $k$ (and of $N$), since $p$ is given.
Thus we can lower bound the failure probability (using $\log(1-x)\ge-x/(1-x)$):
\begin{align}
\varepsilon~
&\ge~\prod_{g=0}^r\left(1-p_s(N-g,k_g,p)\right)~\ge~\exp\left\{\sum_{g=0}^r\log\left(1-\frac{h(p,k_g)}{N-r}\right)\right\}~\ge~\exp\left\{ \sum_{g=0}^r-\frac{h(p,k_g)}{N-r}\frac{1}{1-\frac{h(p,k_g)}{N-r}}\right\}\nonumber\\
&  \ge~\exp\left\{-\sum_{g=0}^r\frac{8(k_g+1)}{(N-r)p}\,\,\left[1-\frac{1}{N-r}\left(1+\frac{8}{p^2}\right)\right]^{-1}\right\}~\ge~\exp\left\{-\frac{8}{p}\frac{q_N}{N-q_N}C_N\right\}~,\label{longchaininmemoryopimalthem}
\end{align}
where we used $g\le r\le q_N$ from above and defined $C_N:=\left[1-\frac{1}{N-q_N}\left(1+\frac{8}{p^2}\right)\right]^{-1}$, which converges to $1$ as $N\to\infty$.  Inverting (\ref{longchaininmemoryopimalthem}) to get an explicit bound on $q_N$ finally gives (\ref{eqninmemorythmlowerbound}).
\end{proof}With the usual conventions in treating $1/0$ and $1/\infty$, Theorem \ref{thmlowerboundwithmemory} is valid for all $p,\varepsilon\in[0,1]$. Similar to Eq.\ (\ref{finiteNbound}) in Theorem \ref{thmsymmalgomemoryless}, one could explicitly specify a sequence $C_N$ in the bound (\ref{eqninmemorythmlowerbound}) which would however complicate the expression.

\bigskip

In summary, Algorithm \ref{algounknownpexcluding}, which uses the exclusion strategy and Grover iterations, is \emph{fault-ignorant} and Theorem \ref{thmknownpwithoutexcluding} provides an upper bound on its runtime. Conversely, Theorem \ref{thmlowerboundwithmemory} provides a lower bound on the number of oracle calls for any symmetric fault-ignorant algorithm using the exclusion strategy. And the inequalities from Lemma \ref{memoryoptimaliylemma} (see Appendix \ref{app:technical}) show that Algorithm \ref{algounknownpexcluding} is basically optimal within this class of algorithms, in the sense that for any $p,\varepsilon\in(0,1)$ its runtime is at most a constant factor (independent of $p$ and $\varepsilon$) above the lower bound from Theorem \ref{thmlowerboundwithmemory}.

And even stronger, the lower bound on the runtime in Theorem \ref{thmlowerboundwithmemory} is proven for algorithms that may ``know'' the noise level $p$ as one of their inputs (such as the algorithm resulting from (\ref{mindynprogramming})), whereas our Algorithm \ref{algounknownpexcluding} basically saturates this lower bound \emph{without} actually being dependent on the actual noise level $p$; the latter feature is the characteristic of \emph{fault-ignorant algorithms}. Thus, not knowing the noise level inflicting upon the quantum computation extends the runtime at most by a constant factor, which was observed in the memoryless setting following Theorem \ref{thmunknownpwithoutexcluding} as well.

\subsection{Our search algorithms, and comparision to other work \cite{RegevSchiff,SBW,temme}}\label{moregeneralmemoryalgos}
In Section \ref{sec:memoryless} we did not allow for a classical memory (except to store the correct output), whereas in Section \ref{sec:withmemory} we allowed a noiseless classical register in order to exclude falsified items from future search rounds. This is obviously not the most general class of algorithms. One may for example perform non-projective measurements which could result in a non-uniform distribution over oracles (cf.~\cite{ashley}) after the measurement. Or one may abandon the division into ``rounds'' altogether, and rather use the noisy quantum register and noiseless classical memory in a more general way (cf.~Appendix \ref{sec:faultignorance}). While these possibilities are rather vague, at least in the noiseless case ($p=0$) Grover's algorithm is exactly optimal \cite{Zalka}. A similar general proof eludes us in the noisy case considered in this paper; Theorems \ref{thmsymmalgomemoryless} and \ref{thmlowerboundwithmemory} give such a bound under more restrictive qualifications.

Nevertheless, the results obtained here may suggest that any nonzero noise level $p>0$ (in our noise models, cf.~Appendix \ref{sec:noisemodels}) prolongs the runtime beyond the noiseless lower bound (in \cite{Zalka}), necessitating it to be proportional to the number of search items as $N\to\infty$. However, for small but constant noise level $p>0$, the runtime bound $\lsim{Np\log(1/\varepsilon)}$ on our algorithms (cf.~Theorems \ref{thmunknownpwithoutexcluding} and \ref{thmknownpwithoutexcluding}) can be far below the $N(1-\varepsilon)$ oracle calls required by the best noiseless classical algorithm. In this regard, see \cite{qrefrigpaper} for a treatment of \emph{locally} acting noise and questions about optimality in this case.

\bigskip

Similar in spirit to our Theorems \ref{thm8kplus1overp}, \ref{thmsymmalgomemoryless} and \ref{thmlowerboundwithmemory}, a lower bound of $\gsim Np/(1-p)$ on the runtime of general noisy quantum search algorithms was obtained in \cite{RegevSchiff}, whose faulty oracle model is somewhat similar to our partial depolarizing noise (\ref{partiallydepolnoise}) (with roughly the same noise parameter $p$; they fixed $\varepsilon\simeq1/10$); see also \cite{temme} for a continuous-time analogue of this result. One difference is that, in Theorems \ref{thmsymmalgomemoryless} and \ref{thmlowerboundwithmemory}, we allow error-free (e.g.~classical) verification steps, whereas every oracle use in \cite{RegevSchiff} is potentially faulty, leading to a diverging bound as $p\to1$. Also, \cite{RegevSchiff,temme} does not include a noiseless classical memory. Due to these extensions, our lower bounds are restricted to ``symmetric'' algorithms consisting of ``rounds'', whereas \cite{RegevSchiff,temme} applies to \emph{all} algorithms within their memoryless setting. These works do not consider achievability of the bound.

The work \cite{SBW} specifically investigated Grover's algorithm under phase noise (see also \cite{temme}), again somewhat analogous to our noise model (\ref{partiallydepolnoise}). It was observed that Grover's algorithm gives an advantage only if it runs for $k\lsim1/p$ steps, and it was hinted that at this time one may perform a measurement and start a new Grover round. In Sections \ref{basicbuildingblocksubsection} and \ref{subsectrepeatbuildingblockknownp}, we give more rigorous arguments (and prefactors) for the scaling $k\sim1/p$, also for algorithms not necessarily consisting of Grover steps. Our Algorithms \ref{algounknownpnotexcluding} and \ref{algounknownpexcluding} do indeed use the division into Grover rounds, but they even function \emph{fault-ignorantly}. The avoidance of active error correction \cite{preskill} is advocated by \cite{SBW} as well.

A more technical difference of our work to most of the literature is that we consider the \emph{average} success probability of an algorithm, i.e.\ averaged over all $N$ oracles with equal weight, whereas the literature most often only investigates the \emph{minimum} success probability of any of the $N$ oracles. This makes our lower bounds stronger than the ones obtained in the literature. (As our constructive algorithms are symmetric, the minimum and average success probabilities coincide for those.)

\section{Conclusion}\label{conclusionsection}
In this paper we have investigated the idea of \emph{fault-ignorant quantum algorithms}. Such algorithms should output the correct result even in the presence of noise of potentially unknown strength, in such a way that the actual noise level $p$ may affect the runtime it takes to arrive at the correct answer (up to some specified failure probability $\varepsilon$), but should not affect that fact that the correct answer is found eventually. This approach allows to reduce the required spatial circuit sizes, for example in comparison to using full-scale quantum error correction, however at the expense of increased runtime.

Following this general idea, we have provided fault-ignorant algorithms for quantum searching that function under depolarizing or dephasing noise of unknown strength $p$. We find the ``quadratic speedup'' to be achievable only for low decoherence rates $p\lsim 1/\sqrt{N}$. Otherwise, our best algorithm's runtime scales asymptotically like $\min\{Np\log 1/\varepsilon,\,N(1-\varepsilon)\}$ as $N\to\infty$. This is linear in $N$, but for low enough noise levels $p$ it nevertheless outperforms the optimal classical search algorithm. Our algorithms may thus be useful for initial uses of quantum computing, when unlimited scalability of the size of quantum computers is not yet be achievable due to technological limitations.

We moreover proved that, up to a constant factor, our algorithms runtimes are optimal within wide classes of noisy quantum search algorithms. Remarkably, for the searching task, it turned out that ignorance of the actual noise level will extend the runtime by only a constant factor compared to the case of known noise level $p$.

Due to the novelty of the approach, our algorithms and lower bounds leave questions for further research. On the side of concrete algorithms, one may ask for them to be independent not only of the noise level $p$ but also of the desired accuracy $\varepsilon$; then one could continue running the algorithm for longer time to increase the success probability or accuracy.

Concerning lower bounds on the complexity of noisy quantum search, it would be worthwhile to establish an analogue of Theorem \ref{thm8kplus1overp} for the case of local noise models or even partial dephasing or general partial entanglement-breaking noise. The latter would immediately extend the validity of our lower bound in Theorem \ref{thmlowerboundwithmemory} to the class of quantum algorithms that use a noiseless classical register in an arbitrary way and need not be divided into ``quantum rounds''. In a similar vein, it may also be possible to prove that the essentially optimal runtime for quantum searching under partial depolarizing noise, which we mainly investigated, can always be achieved by an algorithm divided into such rounds (see Sec.\ \ref{moregeneralmemoryalgos}). If not, it would be very interesting to find fault-ignorant algorithms that are not of this simple check-and-repeat form.

Finally, it would be desirable to investigate whether and how the fault-ignorant idea could possibly be applied to computational models other than the quantum circuit model. This would in particular be desirable in computational models for which quantum error correction techniques are less developed, such as adiabatic quantum computing, and where other methods to achieve fault tolerance are needed. Generally, we hope that, beyond unstructured search, the fault-ignorant idea will be fruitfully applied to algorithmic tasks, such as sampling algorithms.

\bigskip

{\bf{Acknowledgments:}} We thank the two anonymous referees for their insightful comments which helped to improve the paper. This research was initiated at a workshop of the FP7 project COQUIT, which also supported P.\ Vrana and D.\ Reitzner. D.\ Reeb was supported by the Marie Curie Intra-European Fellowship QUINTYL. M.\ Wolf acknowledges support from the CHIST-ERA/BMBF project CQC and the Alfried Krupp von Bohlen und Halbach-Stiftung.

\appendix

\section{Fault-ignorance --- a mathematical framework}\label{sec:faultignorance}

The aim of this appendix is to provide a rigorous mathematical definition of fault-ignorant computing (see Section \ref{moreformaldescription} for a less formal discussion). We do this in a way which enables to include a fairly broad class of algorithmic problems into this framework in a unified manner, while keeping the definition reasonably simple. The definitions are supposed formalize algorithms that do not need to know the actual noise level in order to accomplish their task --- they should be \emph{ignorant} of the noise. A fault-ignorant algorithm should be robust enough to provide the answer (up to some specified failure probability) under any level of noise, the latter affecting only its runtime.

In our formalization, we want to allow the desired and the actual output of the algorithm to be probabilistic (as is usual in sampling and quantum simulation problems), and to depend on an input (as for example in computational problems) and on an oracle (as in search problems). Given the discrete-time nature of the computation as well as of our noise models (cf.~Appendix \ref{sec:noisemodels}), it is necessary to explicitly refer to an allowed class of quantum operations (the set $S$ in the following definition). This can be done most conveniently if one also specifies the (spatial) resources available for performing the computation, i.e.~the size of the quantum computer available or of any accompanying classical register. We thus view the specification of the size of the available quantum register as part of the task to be solved; and indeed, since early realizations of quantum computers will be limited in the number of qubits, it will be a part of the challenge to solve a desired task on the \emph{available} hardware, esp.\ under noise influence because full-scale quantum error correction may be prohibited.  Further, we consider only the quantum register to be noisy, whereas noiseless classical memory is today a reasonable technological assumption.

In light of this, we propose the following definitions, which we explain and supplement by examples afterwards.
\begin{definition}[Noisy quantum computational task]\label{definitionnoisytask}
A \emph{noisy quantum computational task} is a tuple $(X,I,O,f,\mathcal{H},\varrho,D,s,S)$ where
\begin{itemize}
\item $X$, $I$, $O$ are sets,
\item $f\in\mathbb{R}^{X\times I\times O}$ is a stochastic matrix, i.e.~has nonnegative entries and for any $x\in X$ and $i\in I$ we have $\sum_{o\in O}f_{xio}=1$,
\item $\mathcal{H}$ is a Hilbert space,
\item $\varrho_\cdot:I\to\B(\mathcal{H})$ is a function with density operators as values,
\item $D_\cdot:[0,1]\to\CPT(\B(\mathcal{H}))$ is a function with quantum channels on $\B(\mathcal{H})$ as values,
\item $s\in\mathbb{N}$,
\item $S~\subseteq~\CPT(\B(\mathcal{H})\otimes\mathbb{C}^s\otimes\mathbb{C}^{O}\otimes\mathbb{C}^2)^{{X}}~=~\left\{\,T:{X}\to \CPT(\B(\mathcal{H})\otimes\mathbb{C}^s\otimes\mathbb{C}^{O}\otimes\mathbb{C}^2)\,\right\}$.
\end{itemize}
\end{definition}

\begin{figure}
\begin{center}
\includegraphics{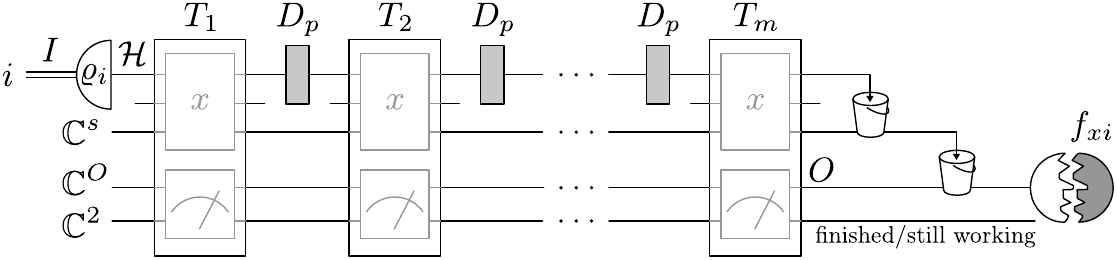}
\end{center}
\caption{\label{fig:ignorance}A fault-ignorant algorithm (Definition \ref{definefaultignorantalgo}), specified to solve a noisy quantum computational task (Definition \ref{definitionnoisytask}): An input state $\varrho_i$ is affected in turn by devised operations $T_j$ (which may include an oracle indexed by $x$, or other coherent operations, or measurement/verification procedures that store information in noiseless classical registers $\mathbb{C}^s$, $\mathbb{C}^O$, $\mathbb{C}^2$) and noise $D_p$ acting on the quantum register. After some number of steps, the probability distribution of the output $o\in O$ should approximate the desired distribution $f_{xi}$, up to error $\varepsilon$.}
\end{figure}

We interpret $X$ as the set labelling the different oracles, $I$ and $O$ as the sets of possible inputs and outputs, respectively (see also Fig.~\ref{fig:ignorance}). For a task which does not make use of an oracle, we let $X$ be any singleton set, and similarly, if the computation does not need an input, we let $|I|=1$. The stochastic matrix $f$ describes the desired distribution on the output set depending on the input and the oracle. The computation is performed using the Hilbert space $\mathcal{H}$ and a classical memory of $s$ states, with the output being written into an additional classical register with $|O|$ states, corresponding to the possible output states in $O$. The additional classical bit ${\mathbb C}^2$ is to have value $1$ iff the algorithm wants to signal that the result is available in the register ${\mathbb C}^O$. The reason for this is that in the fault-ignorant setting the runtime depends on an unknown parameter (namely $p$, see below), and therefore the algorithm needs a way to tell whether the computation is already done, without destroying the quantum state.

The map $\varrho_\cdot:I\to\B(\mathcal{H})$ plays the role of input encoding in the sense that the physical initial state $\varrho_i$ on the register $\B(\mathcal{H})$ represents the abstract input value $i\in I$. The quantum register $\mathcal{H}$ is subject to noise modeled by the quantum channels $D_p$ (as specified in Eq.\ (\ref{defineoutputstate}) below) depending on a parameter $p\in[0,1]$, which we think of as a strength parameter.

Finally, the set $S\subseteq\CPT(\B(\mathcal{H})\otimes\mathbb{C}^s\otimes\mathbb{C}^{O}\otimes\mathbb{C}^2)^{{X}}$ represents the set of allowed elementary steps. An element in this subset is understood as a quantum channel depending on the oracle $x\in X$, whereby the quantum channel acts on the quantum register $\B(\mathcal{H})$ as well as on the classical (diagonal) registers $\mathbb{C}^s$, $\mathbb{C}^{O}$ and $\mathbb{C}^2$ described above (our specification below will be such that all these classical registers are initialized in the state $|0\rangle\langle0|$ at the start of an algorithm). This gives a way to impose conditions on how ``complicated'' the elementary operations are, e.g.~in terms of oracle use or locality requirements (see examples below), and at the same time it maps the abstract oracle $x\in X$ to actual physical transformations $T(x)$ it may perform.

\begin{definition}[Fault-ignorant algorithm]\label{definefaultignorantalgo}
A \emph{fault-ignorant algorithm} solving the noisy quantum computational task $(X,I,O,f,\mathcal{H},\varrho,D,s,S)$ is a family $((T^\varepsilon_n)_n)_{\varepsilon\in(0,1)}$ of finite or infinite sequences with $T^\varepsilon_n\in S$ such that for all $\varepsilon\in(0,1)$ and for all $p\in[0,1]$ the value
\begin{equation}
t^\varepsilon(p)~:=~\min\left\{\,t_0\in\mathbb{N}\,\,\Bigg|\,\,\forall t\geq t_0:\forall x,i:\,\,\frac{1}{2}\left\|\hat{f}_{xi}-s^{\varepsilon,t}_{xi}(p)\right\|_1\le\varepsilon\,\right\}\label{defineminimalsteps}
\end{equation}
is finite, where $\hat{f}\in\mathbb{R}^{X\times I\times O\times\{0,1\}}$ is defined by $\hat{f}_{xio0}:=0$ and $\hat{f}_{xio1}:=f_{xio}$ for $x\in X$, $i\in I$, $o\in O$, and
\begin{equation}
s^{\varepsilon,t}_{xi}(p)~:=~\Tr_{\B(\mathcal{H})\otimes\mathbb{C}^s}\left[T^\varepsilon_t(x)D_pT^\varepsilon_{t-1}(x)D_p\ldots D_pT^\varepsilon_2(x)D_pT^\varepsilon_1(x)\left(\varrho_i\otimes\ketbra{0}{0}\otimes\ketbra{0}{0}\otimes\ketbra{0}{0}\right)\right]\label{defineoutputstate}
\end{equation}
is a probability distribution on $O\times\{0,1\}$.
\end{definition}

Thus, the sequence of operations in (\ref{defineoutputstate}) models a $t$-step noisy quantum computation, in the sense that between any two elementary operations from $S$ a noise channel $D_p$ is to be applied on the quantum register $\B(\mathcal{H})$ (Fig.~\ref{fig:ignorance}). The sequence $(T^\varepsilon_n)_n$ itself describes the algorithmic operations, which may depend on the required accuracy $\varepsilon$, i.e.~on the maximally tolerable distance from the desired output distribution $\hat{f}_{xi}$, cf.~(\ref{defineminimalsteps}).

The requirement for $t^\varepsilon(p)$ to be finite for \emph{any} $p$, even though the algorithm $((T^\varepsilon_n)_n)_{\varepsilon\in(0,1)}$ does \emph{not} depend on $p$, justifies the term \emph{fault-ignorant algorithm}. The condition ``$\forall t\geq t_0$'' in (\ref{defineminimalsteps}) requires the result to be available in the classical memory at any later time when the outside agent, ignorant of the noise level $p$ and thus of the necessary computation time $t^\varepsilon(p)$, may check the ${\mathbb C}^2$ flag to see whether the computation has already finished and want to read out the result. Note that Definition \ref{definefaultignorantalgo} does \emph{not} put any requirements on the efficiency of the algorithm, which however in some circumstances may be quantified by $t^\varepsilon(p)$, i.e.~the minimal number of invocations of $T_k^\varepsilon(x)$ (e.g.~oracle calls); see Section \ref{moreformaldescription}.

\bigskip

We now illustrate the definitions above by two examples.
\begin{example}[Quantum search]As an example we now show how the noisy quantum search problem considered in Sections \ref{sec:memoryless} and \ref{sec:withmemory} fits into this framework. In this case we have a set of $N$ oracles $X=\{1,\ldots,N\}$, and the algorithm is required to identify the oracle, so $O=X$. Since no input is needed, we set $I=\{0\}$. Now the function to be computed is deterministic, so $f$ will be a $0$-$1$ matrix, more specifically $f_{xio}=\delta_{xo}$. The Hilbert space we use is $\mathcal{H}=\mathbb{C}^N\otimes\mathbb{C}^M\otimes\mathbb{C}^2$ for some $M$ setting the size of the ancillary quantum system and $\mathbb{C}^2$ standing for the ancillary system used by the oracle, cf.\ Eq.\ (\ref{oraclebasic}). The noise acting on it is for example partial depolarizing, $D_p(\varrho)=p(\Tr\varrho)\frac{I_{2NM}}{2NM}+(1-p)\varrho$. Since there is no input, $\varrho_0$ is just any fixed initial state, e.g.~we may take $\varrho_0=\frac{I_{2NM}}{2NM}$. In the version without classical memory we set $s=1$ (Section \ref{sec:memoryless}), while if we are to exclude previously tested elements, we may set $s=2^N$ corresponding to an $N$-bit classical memory (Section \ref{sec:withmemory}).

The set of allowed elementary operations to be applied between two noise hits is
\begin{equation}\label{specificationSgroverAppendix}
S~=~\left\{\,T\,\,\Big|\,\,\exists\, C_1,C_2\in\CPT(\B(\mathcal{H})\otimes\mathbb{C}^s\otimes\mathbb{C}^{O}\otimes\mathbb{C}^2):\,\forall x\in X:\,T(x)=C_2\circ\widehat{O}_x\circ C_1\,\right\}~,
\end{equation}
where $\widehat{O}_x$ first prepares the pure state $\frac{1}{\sqrt{2}}(\ket{0}-\ket{1})$ on the $\mathbb{C}^2$-subsystem of $\mathcal{H}$, and then acts as $\ket{x',b}\mapsto\ket{x',b+\delta_{x,x'}}$ on $\mathcal{H}$ (cf.\ Eqs.\ (\ref{oraclebasic}) and (\ref{oracleaction})) and as the identity on the classical registers. This choice of $S$ means that an elementary step consists of a single use of the oracle, possibly applying an arbitrary (but oracle-independent!) channel before and afterwards.

Finally, half of the trace-distance in (\ref{defineminimalsteps}) gives, when the ready-flag $\mathbb{C}^2$ has been set to $1$, exactly the probability of not outputting the correct oracle index in the classical output register, and it is this failure probability which we wanted to be smaller than $\varepsilon$ in Sections \ref{sec:memoryless} and \ref{sec:withmemory}.
\end{example}

\begin{example}[Binary addition]\label{binaryadditionexample}This example illustrates the possibility to introduce some kind of ``locality structure''. The task is the addition of two $n$-bit numbers given their binary representation using local gates on a $2n$-bit quantum register with local dephasing noise. Such a task is given by $X=\{0\}$, $I=\{0,1,\ldots,2^n-1\}\times\{0,1,\ldots,2^n-1\}$, $O=\{0,1,\ldots,2^{n+1}-1\}\simeq\{0,1\}^{n+1}$, $f_{x(i_1,i_2)o}=\delta_{i_1+i_2,o}$, the Hilbert space is $\mathcal{H}=\left({\mathbb{C}^2}\right)^{\otimes 2n}$, $\varrho_{(i_1,i_2)}=\ket{i_1,i_2}$ (with $i_1,i_2$ considered as a $2n$-bit string), $D_p=d_p^{\otimes 2n}$ with $d_p:\B(\mathbb{C}^2)\to\B(\mathbb{C}^2)$ the partial dephasing with strength $p$, $s=1$, and $S\subseteq\CPT(\B({\mathbb{C}^2}^{\otimes 2n})\otimes\mathbb{C}^{O}\otimes\mathbb{C}^2)$ consisting of $1$- and $2$-(qu)bit gates, i.e.~channels which act as the identity on all but at most two bits (quantum or classical), remembering the subsystem structure of $\mathbb{C}^O\simeq\left(\mathbb{C}^2\right)^{\otimes n+1}$.
\end{example}

An algorithm that works only for \emph{known} noise level $p$ is \emph{not} fault-ignorant; such algorithms may be formalized by assuming a $p$-dependence $(T^{\varepsilon,p}_n)_n$ in the family of sequences in Definition \ref{definefaultignorantalgo}. On the other hand, if these sequences do \emph{not} depend on the desired accuracy $\varepsilon$, i.e.~$(T^{\varepsilon}_n)_n\equiv(T_n)_n$, then the algorithm does have another feature:~the level of accuracy $\varepsilon$ need not be specified before starting the algorithm; when higher accuracy is desired (i.e.\ smaller $\varepsilon$), one only needs to continue running the algorithm for longer time.

Returning to efficiency questions, one may call a fault-ignorant algorithm (or rather, a family of fault-ignorant algorithms, parametrized by some ``problem size'' $N$) \emph{efficient} if, for any $p$, any $\varepsilon$ and any $N$, its runtime is within a constant factor times the runtime of the best algorithm that may depend on $\varepsilon$ \emph{and} on $p$ (see Section \ref{moreformaldescription}). In this sense, our Theorems \ref{thmsymmalgomemoryless} and \ref{thmlowerboundwithmemory} can be seen as statements that Algorithms \ref{algounknownpnotexcluding} and \ref{algounknownpexcluding} are efficient (within restricted classes of algorithms).

It should be clear that there is nothing special about the set $[0,1]$ parametrizing the noise channels apart from the possibility to interpret it as ``strength'' or to use it directly as a coefficient in a convex combination. One could instead consider a family $(D_p)_{p\in P}$ of noise channels indexed by an arbitrary set $P$ parametrizing wider classes of noise, and so allowing for ``more'' ignorance about the faults. Another obvious extension of Definition \ref{definitionnoisytask} would be to allow for time-dependent noise.

\section{Noise models}\label{sec:noisemodels}
Here we elaborate on different kinds of noise which may be acting on the quantum computer during its runtime, and in particular on the noise models to which our results apply.

\emph{Partial depolarizing},
\begin{equation}
D_p^{I}(\varrho)~=~p\frac{I_d}{d}\Tr(\varrho)\,+\,(1-p)\varrho\label{partialdepolsecond}
\end{equation}
for noise level $p\in[0,1]$, has been defined in Eq.~(\ref{partiallydepolnoise}), and corresponds to erasing the state of the quantum register with probability $p$ (between any two oracle calls). Somewhat similar is \emph{partial dephasing},
\begin{equation}
D_p^{\varphi}(\varrho)~:=~p\sum_{x=1}^d\bra{x}\varrho\ket{x}\ketbra{x}{x}\,+\,(1-p)\varrho\label{definepartialdephasing}~,
\end{equation}
acting on states $\varrho$ on a $d$-dimensional Hilbert space equipped with a distinguished orthonormal basis $\{\ket{x}\}_{x=1}^d$ (for these, we imagine the basis states with respect to which the oracles (\ref{oraclebasic}) act). For $p=1$, all quantum coherence is lost between any two oracle calls, but one can still perform a classical algorithm (on the basis states $\ket{x}$); in this sense, the noise level $p$ of partial dephasing parametrizes how ``quantum'' a search algorithm may be. Our constructive algorithms also work with the runtimes guaranteed by Theorems \ref{thmunknownpwithoutexcluding} and \ref{thmknownpwithoutexcluding} under the more general noise model
\begin{equation}
D_p^{T}(\varrho)~:=~pT(\varrho)\,+\,(1-p)\varrho~,\label{gentnoise}
\end{equation}
where $T$ may be any quantum channel, see discussion below Eq.\ (\ref{partiallydepolnoise}).

Our formalization of noisy search algorithms (Sections \ref{sec:memoryless} and \ref{sec:withmemory}) does allow to \emph{noiselessly check} whether a given index $x'$ equals the marked element $x$, since immediately before and after an oracle call one may perform any quantum operation without noise (cf.~Eq.~(\ref{statefrombasicbuildingblock})) and thus one action of $\widetilde{O}_x$ from (\ref{oraclebasic}) on a suitably prepared quantum register can accomplish this check and write the result into the (noiseless) classical memory. This fact is important, as it allows the verification/falsification step at the conclusion of each round (cf.\ Algorithms \ref{algounknownpnotexcluding} and \ref{algounknownpexcluding}). Alternatively, such a noiseless check may be implemented by a classical table lookup.

The above noise models are formulated in discrete time, but our prescription for the noise $D_p$ to act between any two oracle calls is supposed to model the continuous action of noise in a real-world situation. For example, since in (\ref{partialdepolsecond}) the probability to ``lose'' the quantum computer between any two consecutive oracle calls is $p$, its lifetime is roughly $1/p$ (measured in the time between two oracle calls); and indeed, the time scale $k\sim1/p$ appeared often in the analysis in Subsection \ref{subsectrepeatbuildingblockknownp}.

Note that quantum error correction \cite{shorfaulttolerant,klz98b,nielsenchuang} does not work for partial depolarizing (\ref{partialdepolsecond}) or dephasing (\ref{definepartialdephasing}), as these noises affect the whole quantum computer ``collectively''. This means that the whole quantum computer is subjected to a ``flash'' of noise, such as drifting lasers or an external hit by a magnetic field. These may be reasonable noise models for not-too-large quantum computers.

\bigskip

Discussing the noise models more quantitatively, we first notice that the lower bound (\ref{lowerboundprobs}) on the success probability after $k$ steps of Grover's algorithm under noise applies to all three noise models (\ref{partialdepolsecond})--(\ref{gentnoise}). For partial depolarizing (\ref{partialdepolsecond}) one can compute the success probability in (\ref{lowerboundprobs}) exactly: the $2^k-1$ omitted terms are of the form
\begin{equation}
\sum_{x=1}^N\frac{1}{N}\,p^m(1-p)^{k-m}\bra{x}\frac{I_N}{N}\ket{x}~=~\frac{1}{N}\,p^m(1-p)^{k-m}~,
\end{equation}
where $m$ is the number of noise hits. These terms correspond to events when the maximally mixed state is prepared at some point due to noise acting and since both $D_p$ and $G_x$ (see before Eq.~(\ref{firststatementsofpsinmaintext})) are unital. As the coefficients of these terms sum up to $(1-(1-p)^k)$, the exact success probability for this model is
\begin{equation}\label{exactsuccessprobabfordepol}
p_s^{pol}(N,k,p)~=~\left(1-(1-p)^k\right)\frac{1}{N}\,+\,(1-p)^k\sin^2\left((2k+1)\arcsin{\frac{1}{\sqrt{N}}}\right)~.
\end{equation}
Using this exact success probability for partial depolarizing improves the runtime bounds for this specific noise model (e.g.~Theorem \ref{thmunknownpwithoutexcluding} for large noise level $p$), but the lower bound (\ref{lowerboundprobs}) is quite tight unless $kp\gg1$. The drawbacks of relying on a too specific noise model are furthermore discussed below Eq.\ (\ref{eq:firstbound}).

An exact computation of the success probability can also be done for partial dephasing (\ref{definepartialdephasing}), but is much more involved. Furthermore, one can prove that the success probability for partial dephasing is not smaller than for depolarizing at the \emph{same} noise level:~$p_s^{\varphi}(N,k,p)\ge p_s^{pol}(N,k,p)$. This inequality is, however, not immediate, as for identical noise parameters $p\in(0,1)$, partial depolarizing $D_p^{I}$ \emph{cannot} be obtained by post-processing $D_p^{\varphi}$, i.e.~$D_p^{I}\neq P\circ D_p^{\varphi}$ for all quantum channels $P$.

Our proofs of the general lower bounds on the number of oracle calls (Theorems \ref{thmsymmalgomemoryless} and \ref{thmlowerboundwithmemory}) require partial depolarizing (\ref{partialdepolsecond}), as Theorem \ref{thm8kplus1overp} was proved only for generalized partial depolarizing noise $\varrho\mapsto p\tau\Tr{\varrho}+(1-p)\varrho$ and the proofs (and presuppositions) of Theorems \ref{thmsymmalgomemoryless} and \ref{thmlowerboundwithmemory} require furthermore a symmetry between the oracle indices, limiting further to $\tau=I_d/d$.

\bigskip

Finally, we argue that it makes sense in Sections \ref{sec:memoryless} and \ref{sec:withmemory} to perform efficiency analyses by keeping the noise parameter $p$ fixed while the size of the quantum register $N$ (or $NM$) varies, possibly even tending to infinity. Phrased another way, we ask whether, for example, the strength of partial depolarizing (\ref{partialdepolsecond}) with parameter $p$ on an $d$-dimensional quantum system is comparable to the strength of partial depolarizing with the same parameter $p$ in $d'$ dimensions, even if $d$ and $d'$ are widely different.

First, both partial depolarizing (\ref{partialdepolsecond}) and partial dephasing (\ref{definepartialdephasing}) (the latter with respect to a tensor product of bases) are compatible under tracing out subsystems when the \emph{same} parameter $p$ is used on the tensor product system and on the subsystem:
\begin{equation}
{\rm tr}_B\left[D^{I,\varphi}_p(\varrho_{AB})\right]~=~D^{I,\varphi}_p\left({\rm tr}_B[\varrho_{AB}]\right)\qquad\forall p\in[0,1]~.
\end{equation}
With this parametrization of the noise, it does therefore not help for algorithm performance to introduce larger and larger ancillary systems or ``innocent bystanders'':~the noise on the ``Grover part'' of the algorithm cannot be made small in this way, which is a reasonable requirement.

Secondly, both for partial depolarizing and dephasing, one can obtain the noise in $d-1$ dimensions by post-processing the noise on a $d$-dimensional system:
\begin{equation}
D_p^\varphi(\varrho)\oplus0~=~P\circ D_p^\varphi(\varrho\oplus0)\qquad\forall p\in[0,1]\,\forall\varrho\in\B(\mathbb{C}^{d-1})~,
\end{equation}
where $P={\rm{id}}$ for dephasing noise (and the additional dimension $\ket{d}$ has to correspond to one of the basis vectors in (\ref{definepartialdephasing})), and $P(X):=(I-\ket{d}\bra{d})X(I-\ket{d}\bra{d})+(I-\ket{d}\bra{d})\bra{d}X\ket{d}/(d-1)$ for depolarizing. This compatibility under restrictions of the Hilbert space to subspaces is important and sensible in the context of exclusion algorithms (Algorithm \ref{algounknownpexcluding}, and proof of Theorem \ref{thmknownpwithoutexcluding}), where the effective dimension of the quantum register is reduced by $1$ in each round.

\section{Proof of Theorem \ref{thmunknownpwithoutexcluding}}\label{mainthm}
\begin{proof}As we assume the noise to act symmetrically with respect to the different oracles (which both partial depolarizing and dephasing do) and since the Grover steps of Algorithm~\ref{algounknownpnotexcluding} are symmetric as well, the success events in different rounds $g$ are independent. Thus, with (\ref{lowerboundprobs}), we can upper bound the failure probability after round $g^*$ by
\begin{equation}\label{repeatfailureprobabinappendix}
\begin{split}
\fail~\equiv~\prod_{g=0}^{g^*}\left(1-p_s(N,k_g,p)\right)~
&=~\exp\left\{\sum_{g=0}^{g^*}\log\left(1-p_s(N,k_g,p)\right)\right\}~\le~\exp\left\{-\sum_{g=0}^{g^*}p_s(N,k_g,p)\right\}\\
&\le~\exp\left\{-\sum_{g=0}^{g^*}(1-p)^{k_g}\sin^2\left((2k_g+1)\arcsin\frac{1}{\sqrt{N}}\right)\right\}~.
\end{split}
\end{equation}
To show that the failure probability is at most $\varepsilon$, as desired, below we will lower bound the sum $\sum_{g=0}^{g^*}$ and adjust parameters such that it is at least $\log(1/\varepsilon)$. The sum can be further bounded by assuming it to start at some $g=g_*$ with $0\leq g_*\leq g^*$:
\begin{equation}
\label{eq:sumfail}
\sum_{g=0}^{g^*}\left(\ldots\right)~\geq~(1-p)^{k_{g_*}}\sum_{g=g_*}^{g^*}\sin^2\left((2k_g+1)\arcsin\frac{1}{\sqrt{N}}\right)~.
\end{equation}

The number of oracle calls can be upper bounded as follows:
\begin{equation}\label{repeatupperboundonoraclecallsinappendix}
\begin{split}
\sum_{g=0}^{g^*}(k_g+1)~
&\le~g^*+1+\frac{\pi}{4}\sqrt{N}\,\sum_{g=0}^{g^*}\left(1+\frac{g}{c\log(1/\varepsilon)}\right)^{-1/2}\\
&\le~g^*+1+\frac{\pi}{4}\sqrt{N}+\frac{\pi}{4}\sqrt{N}\int_{g=0}^{g^*}\left(1+\frac{g}{c\log(1/\varepsilon)}\right)^{-1/2}dg\\
&\le~g^*+1+\frac{\pi}{4}\sqrt{N}+\frac{\pi}{2}\sqrt{N}\left(c\log\frac{1}{\varepsilon}\right)\sqrt{1+\frac{g^*}{c\log(1/\varepsilon)}}~.
\end{split}
\end{equation}

The proof of the theorem now consists in showing that there exists a number $g^*$ (of rounds) such that the failure probability (\ref{repeatfailureprobabinappendix}) is at most $\varepsilon$, while the number of oracle calls (\ref{repeatupperboundonoraclecallsinappendix}) does not exceed the value given in (\ref{stepsinfirstfaulttolerantalgo}). This argument will be split into three cases, as sketched in the main text. We make abundant use of the fact that $c\log(1/\varepsilon)\geq 1$, since $\varepsilon\leq\varepsilon_0:=1/2$ and $c=10$. We also define $N_0:=100$ and assume $N\geq N_0$ throughout, in accord with the statement of Theorem \ref{thmunknownpwithoutexcluding}.

\medskip

{\bf{Case 1:}} $p\leq4/(\pi\sqrt{N})$. In this case, the actual decoherence rate $p$ is small, and we take only the first few rounds $g$ into account to obtain an upper bound on $\fail$. By using $\arcsin(1/\sqrt{N})\geq 1/\sqrt{N}$, $\sin(x)\ge x\sin(x_0)/x_0$ for $0\leq x\le x_0\le\pi$, and setting
\[
Q(N_0)~:=~\frac{\sin^2\left[\left(\frac{\pi}{2}\sqrt{N_0}+1\right)\arcsin(1/\sqrt{N_0})\right]}{\left[\left(\frac{\pi}{2}\sqrt{N_0}+1\right)\arcsin(1/\sqrt{N_0})\right]^2}~,
\]
we continue in bounding (\ref{eq:sumfail}):
\begin{equation}
\label{eq:sumfailadvanced}
\begin{split}
\sum_{g=0}^{g^*}\left(\ldots\right)~&\geq~(1-p)^{k_{g_*}}Q(N_0)\sum_{g=g_*}^{g^*}\left(2\lfloor\alpha_g(\varepsilon)\frac{\pi}{4}\sqrt{N}\rfloor+1\right)^2\frac{1}{N}\\
&\geq~\frac{\pi^2}{16}(1-p)^{k_{g_*}}Q(N_0)\sum_{g=g_*}^{g^*}\alpha^2_g(\varepsilon)~\geq~\frac{\pi^2}{16}(1-p)^{k_{g_*}}Q(N_0)\int_{g=g_*}^{g^*}\alpha^2_g(\varepsilon)dg~,
\end{split}
\end{equation}
where we used $2\lfloor x\rfloor+1\geq x$ for $x\geq 0$.
Choosing $g_*:=0$ and $g^*:=\left\lceil c_0\log(1/\varepsilon)\right\rceil$, we evaluate the integral in (\ref{eq:sumfailadvanced}) to find
\begin{equation}
\begin{split}
\sum_{g=0}^{g^*}\left(\ldots\right)~
&\geq~\frac{\pi^2}{16}(1-p)^{k_0}Q(N_0)c\log\left(1+\frac{c_0}{c}\right)\log\frac{1}{\varepsilon}\\
&\geq~\frac{\pi^2}{16}\left(1-\frac{4}{\pi}\frac{1}{\sqrt{N_0}}\right)^{\frac{\pi}{4}\sqrt{N_0}}Q(N_0)c\log\left(1+\frac{c_0}{c}\right)\log\frac{1}{\varepsilon}~,
\end{split}
\end{equation}
where we used $p\leq\pi/(4\sqrt{N})$. When the prefactor of $\log(1/\varepsilon)$ is at least $1$, then the failure probability $\fail$ will be at most $\varepsilon$ by Eq.~(\ref{repeatfailureprobabinappendix}); this happens e.g.~for the choice $c_0:=170$. The number of oracle calls from (\ref{repeatupperboundonoraclecallsinappendix}) is then
\begin{equation}\label{upperboundonruntimeincase1}
\leq~2+\frac{\pi}{4}\sqrt{N}+\left(\frac{\pi}{2}\sqrt{2c^2+c_0c}+\frac{c_0}{\sqrt{N_0}}\right)\sqrt{N}\log\frac{1}{\varepsilon}~,
\end{equation}
which is less than $\sqrt{N}+86\sqrt{N}\log(1/\varepsilon)$ due to $2+\pi\sqrt{N}/4\leq\sqrt{N}$. We notice that the term linear in $N$ (cf.~Eq.~(\ref{stepsinfirstfaulttolerantalgo})) is absent from the runtime (\ref{upperboundonruntimeincase1}) in Case 1; intuitively speaking, such small noise levels $p$ still allow for quadratic speedup in the quantum search.

{\bf{Case 2:}} $4/(\pi\sqrt{N})\leq p\leq p^*$, where we define $p^*:=0.3$. In this intermediate region of the actual decoherence rate (the need for $p^*<1$ will become evident later), we define $g^*:=\left\lceil c_2(\pi^2/16)Np^2\log(1/\varepsilon)\right\rceil$  and $g_*:=\left\lceil c_1(\pi^2/16)Np^2\log(1/\varepsilon)\right\rceil$ with $c_2>c_1>c$ to be determined later. Our choice $c_1>c$ will in particular imply $g_*\geq c\log(1/\varepsilon)$, so that we can continue lower-bounding (\ref{eq:sumfailadvanced}) by bounding the integrand,
\begin{equation}
\begin{split}
\sum_{g=0}^{g^*}\left(\ldots\right)~
&\geq~\frac{\pi^2}{32}(1-p)^{k_{g_*}}Q(N_0)\left(c\log\frac{1}{\varepsilon}\right)\int_{g=g_*}^{g^*}\frac{dg}{g}~\geq~\frac{\pi^2}{32}(1-p)^{k_{g_*}}Q(N_0)\,c\log\frac{g^*}{g_*}\log\frac{1}{\varepsilon}\\
&\geq~\frac{\pi^2}{32}\exp\left\{\frac{\pi}{4}\sqrt{N}\left(1+\frac{g_*}{c\log(1/\varepsilon)}\right)^{-1/2}\log(1-p)\right\}Q(N_0)\,c\log\frac{c_2}{c_1+\left(\frac{\pi^2}{16}Np^2\log(1/\varepsilon)\right)^{-1}}\log\frac{1}{\varepsilon}\\
&\geq~\frac{\pi^2}{32}\exp\left\{\frac{\log(1-p^*)}{p^*}\sqrt{\frac{c}{c_1}}\right\}Q(N_0)\,c\log\frac{c_2}{c_1+\log^{-1}(1/\varepsilon_0)}\log\frac{1}{\varepsilon}~,
\end{split}
\end{equation}
where we used $p\log(1-p^*)\le p^*\log(1-p)$ (due to $p\geq p^*$) and $(\pi^2/16)Np^2\log(1/\varepsilon)\geq\log(1/\varepsilon_0)$ (due to $p\geq4/(\pi\sqrt{N})$). Again, the prefactor of $\log(1/\varepsilon)$ can be made larger than $1$ by choosing $c_1:=20$ and $c_2:=180$.  The number of oracle calls from (\ref{repeatupperboundonoraclecallsinappendix}) is then
\begin{equation}
\leq~2+\frac{\pi}{4}\sqrt{N}+\frac{\pi}{\sqrt{2}}\sqrt{N}c\log\frac{1}{\varepsilon}+\frac{\pi^2}{16}\left(p^*c_2+2\sqrt{cc_2}\right)Np\log\frac{1}{\varepsilon}~,
\end{equation}
which again is less than $\sqrt{N}+86(Np+\sqrt{N})\log(1/\varepsilon)$.

{\bf{Case 3:}} $p\geq p^*$. For large actual noise levels $p$, it is enough to consider only those rounds $g$ for which $k_g(\varepsilon)=0$; in each such round, a measurement is performed on the equal superposition state, leading to a success probability of exactly $1/N$. This leads to the choice $g_*:=\lfloor c(\pi^2/16)N\log(1/\varepsilon)\rfloor$ and $g^*:=\left\lfloor c_3(\pi^2/16)N\log(1/\varepsilon)\right\rfloor$, and we can lower-bound (\ref{eq:sumfail}):
\begin{equation}
\sum_{g=0}^{g^*}\left(\ldots\right)~\geq~\sum_{g=g_*}^{g^*}\frac{1}{N}~=~\frac{1}{N}\left(g^*-g_*+1\right)~\geq~\frac{\pi^2}{16}(c_3-c)\log\frac{1}{\varepsilon}~.
\end{equation}
By choosing $c_3:=12$, the prefactor of $\log(1/\varepsilon)$ exceeds $1$.  The number of oracle calls from (\ref{repeatupperboundonoraclecallsinappendix}) is then
\begin{equation}
\leq~1+\frac{\pi}{4}\sqrt{N}+\frac{\pi}{2}c\sqrt{N}\log\frac{1}{\varepsilon}+\frac{\pi^2}{16p^*}\left(c_3+2\sqrt{cc_3}\right)Np\log\frac{1}{\varepsilon}~,
\end{equation}
which is again less than $\sqrt{N}+86(Np+\sqrt{N})\log(1/\varepsilon)$.

\medskip

So far we have proved that the algorithm is fault-tolerant with runtime at most $\sqrt{N}+86(Np+\sqrt{N})\log(1/\varepsilon)$. Due to $\varepsilon\leq\varepsilon_0=1/2$, (\ref{stepsinfirstfaulttolerantalgo}) is an upper bound on the runtime.
\end{proof}
For the proof of the prefactor $20$ mentioned below Theorem \ref{thmunknownpwithoutexcluding}, the Case 3 in the proof above can be neglected, and we set $p^*:=\epsilon_0:=0.1$, and alter the lower bound (\ref{eq:sumfailadvanced}) a bit, such that constants $c_0$, $c_1$, $c_2$ etc.\ can be found to yield the lower guaranteed runtime.

\section{Technical Lemmata}\label{app:technical}
By the following lemma, we convert Zalka's implicit bound, Eq.~(\ref{zalkaimplicit}), into an explicit one (see after the proof below):
\begin{lemma}\label{lemmaxyineq}
Let $0\le x\le 1$ and $0\le y\le 1$. Then, for any $0<\alpha<\infty$:
\begin{equation}
\sqrt{xy}+\sqrt{(1-x)(1-y)}~\le~1+\frac{1}{2}(\alpha-1)x+\frac{1}{2}\left(\frac{1}{\alpha}-1\right)y\label{lemmaxyineqeqn}~.
\end{equation}
\end{lemma}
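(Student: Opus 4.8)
The plan is to prove the inequality by two applications of the arithmetic--geometric mean inequality $\sqrt{uv}\le\frac{1}{2}(u+v)$, which holds for all $u,v\ge0$, choosing a weighting in the first term that produces the asymmetric coefficients $\alpha$ and $1/\alpha$ on the right-hand side. The guiding observation is that those coefficients must come from splitting the product $xy$ as $(\alpha x)(y/\alpha)$ \emph{before} applying AM--GM, while the constant $1$ together with the $-1$ contributions to both coefficients should come from an \emph{unweighted} AM--GM bound on the term $\sqrt{(1-x)(1-y)}$.

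First I would note that the hypotheses $0\le x,y\le1$ and $0<\alpha<\infty$ guarantee that $\alpha x$, $y/\alpha$, $1-x$ and $1-y$ are all nonnegative, so AM--GM is applicable to each pair. For the first term I would write
\begin{equation}
\sqrt{xy}~=~\sqrt{(\alpha x)\left(\tfrac{y}{\alpha}\right)}~\le~\frac{1}{2}\left(\alpha x+\frac{y}{\alpha}\right)~,
\end{equation}
and for the second term I would apply the plain (unweighted) AM--GM,
\begin{equation}
\sqrt{(1-x)(1-y)}~\le~\frac{1}{2}\bigl((1-x)+(1-y)\bigr)~=~1-\frac{x}{2}-\frac{y}{2}~.
\end{equation}
Adding the two bounds and collecting the coefficients of $x$ and $y$ then yields exactly the right-hand side of (\ref{lemmaxyineqeqn}), namely $1+\frac{1}{2}(\alpha-1)x+\frac{1}{2}\left(\frac{1}{\alpha}-1\right)y$, which completes the proof.

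The only ``hard'' part is recognizing the correct weighting, and even this is essentially forced: if one were to apply a weighted AM--GM with some common factor $\beta$ to the second term, matching the constant term on the right-hand side would require $\frac{1}{2}(\beta+1/\beta)=1$, i.e.\ $\beta=1$, so the unweighted choice is the unique one consistent with the stated inequality. I therefore expect no genuine obstacle; the result is a short verification once the splitting $xy=(\alpha x)(y/\alpha)$ is made, and it is exactly the ingredient needed to turn Zalka's implicit bound (\ref{zalkaimplicit}) into the explicit form (\ref{psboundfromzalka}) by optimizing over $\alpha$.
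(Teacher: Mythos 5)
Your proof is correct, and it takes a genuinely different route from the paper's. You bound each term separately with the arithmetic--geometric mean inequality, splitting $\sqrt{xy}=\sqrt{(\alpha x)(y/\alpha)}\le\frac{1}{2}(\alpha x+y/\alpha)$ and using the unweighted bound $\sqrt{(1-x)(1-y)}\le 1-\frac{x}{2}-\frac{y}{2}$; summing gives exactly the right-hand side of (\ref{lemmaxyineqeqn}), with no calculus and no limiting argument. The paper instead exploits global concavity of $h(x,y)=\sqrt{xy}+\sqrt{(1-x)(1-y)}$ on $[0,1]^2$: it writes down the tangent plane at an interior point $(x_0,y_0)$ with $y_0=\alpha^2 x_0$, uses that a concave function lies below each of its tangent planes, and then takes the boundary limit $x_0\to 0$ to recover the stated coefficients. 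The two arguments are closely related --- AM--GM is itself a tangent-line bound for the concave square root, so your proof can be read as the degenerate case of the paper's tangent-plane computation --- but yours is shorter and entirely elementary, while the paper's makes transparent \emph{where} the family of bounds parametrized by $\alpha$ comes from (the direction of approach to the corner $(0,0)$ of the square), which is the point at which the bound is tight. Your closing observation that the unweighted split of the second term is forced by matching the constant term is a nice sanity check, and your remark on the intended application (optimizing over $\alpha$ in Zalka's bound (\ref{zalkaimplicit}) to get (\ref{psboundfromzalka})) agrees with how the paper uses the lemma.
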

\begin{proof}
The left hand side is a concave function of $(x,y)\in[0,1]^2$, smooth in the interior, and hence its graph stays under its tangent plane drawn at any point $(x_0,y_0)$ for $0<x_0<1$ and $0<y_0<1$. The partial derivatives of $h(x,y)=\sqrt{xy}+\sqrt{(1-x)(1-y)}$ at $(x,y)$ are
\begin{equation}
h_x(x,y)=\frac{\sqrt{y}}{2\sqrt{x}}-\frac{\sqrt{1-y}}{2\sqrt{1-x}}\quad\textrm{and}\quad h_y(x,y)=\frac{\sqrt{x}}{2\sqrt{y}}-\frac{\sqrt{1-x}}{2\sqrt{1-y}}~.
\end{equation}
Writing $y_0=\alpha^2x_0$ for $0<\alpha<\infty$, we have
\begin{equation}
\begin{split}
h(x,y)~
  &  \le~ h(x_0,y_0)+h_x(x_0,y_0)(x-x_0)+h_y(x_0,y_0)(y-y_0)  \\
  & =~h(x_0,\alpha^2x_0)+\frac{1}{2}\left(\alpha-\sqrt{\frac{1-\alpha^2x_0}{1-x_0}}\right)(x-x_0)+\frac{1}{2}\left(\frac{1}{\alpha}-\sqrt{\frac{1- x_0}{1-\alpha^2x_0}}\right)(y-\alpha^2x_0)~.
\end{split}
\end{equation}
Now taking the limit $x_0\to 0$ yields (\ref{lemmaxyineqeqn}).
\end{proof}
We apply this lemma to Zalka's bound \cite{Zalka} (Eq.~(\ref{zalkaimplicit}) above) with $x=p_s$ and $y=\frac{1}{N}$:
\begin{equation}
\begin{split}
4k^2~
  & \ge~2N-2\sqrt{N}\sqrt{p_s}-2\sqrt{N}\sqrt{N-1}\sqrt{1-p_s}  \\
  & =~2N\left[1-\left(\sqrt{p_s\frac{1}{N}}+\sqrt{(1-p_s)\left(1-\frac{1}{N}\right)}\right)\right]  \\
  & \ge~2N\left[1-\left(1+\frac{1}{2}(\alpha-1) p_s+\frac{1}{2}\left(\frac{1}{\alpha}-1\right)\frac{1}{N}\right)\right]  \\
  & =~(1-\alpha)Np_s-\left(\frac{1}{\alpha}-1\right)~.
\end{split}
\end{equation}
One can easily see that the sharpest bound on $p_s$ is obtained for $\alpha=(2k+1)^{-1}$, yielding $Np_s\le(2k+1)^2$.

\bigskip

The following lemma shows that Algorithm \ref{algounknownpexcluding} in Section \ref{sec:withmemory} is optimal within the class of algorithms considered in Theorem \ref{thmlowerboundwithmemory}, up to a constant factor in the runtime:
\begin{lemma}\label{memoryoptimaliylemma}
For $0<p<1$ and $0<\varepsilon<1$ the following inequalities hold:
\begin{equation}
\frac{1}{1+\frac{1}{p\log(1/\varepsilon)}}~\le~\min\left\{1-\varepsilon,\,\,p\log\frac{1}{\varepsilon}\right\}~\le~\frac{2}{1+\frac{1}{p\log(1/\varepsilon)}}~.
\end{equation}
\end{lemma}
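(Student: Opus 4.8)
The plan is to abbreviate $L := p\log(1/\varepsilon)$, which is strictly positive since $0<p<1$ and $0<\varepsilon<1$ force $\log(1/\varepsilon)>0$. Both the left- and right-hand bounds then read $\frac{1}{1+1/L}=\frac{L}{1+L}$, so the claim becomes $\frac{L}{1+L}\le\min\{1-\varepsilon,\,L\}\le\frac{2L}{1+L}$, and I would establish the two outer inequalities separately.

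For the left inequality I would split the minimum. The bound $\frac{L}{1+L}\le L$ is immediate because $1+L\ge 1$. The remaining bound $\frac{L}{1+L}\le 1-\varepsilon$ I would clear of denominators: it is equivalent to $\varepsilon(1+L)\le 1$, i.e.\ to $\varepsilon+\varepsilon p\log(1/\varepsilon)\le 1$. Here I would invoke the elementary inequality $\log x\le x-1$ (valid for all $x>0$) at $x=1/\varepsilon$, which gives $\varepsilon\log(1/\varepsilon)\le 1-\varepsilon$; together with $p\le 1$ this yields $\varepsilon+\varepsilon p\log(1/\varepsilon)\le\varepsilon+p(1-\varepsilon)\le\varepsilon+(1-\varepsilon)=1$, as needed.

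For the right inequality I would argue by cases according to whether $L\le 1$ or $L>1$. If $L\le 1$, then $\min\{1-\varepsilon,L\}\le L\le\frac{2L}{1+L}$, the last step holding precisely because $L\le 1$ is equivalent to $1+L\le 2$, hence to $L\le\frac{2L}{1+L}$. If instead $L>1$, then $\frac{2L}{1+L}>1$ (again equivalent to $L>1$), while $\min\{1-\varepsilon,L\}\le 1-\varepsilon<1$, so the desired bound follows once more.

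The only genuinely non-algebraic ingredient is the transcendental step $\varepsilon\log(1/\varepsilon)\le 1-\varepsilon$ appearing in the left inequality; everything else is a rearrangement and a case split. I expect this to be the main (though still routine) obstacle, and the clean choice $\log x\le x-1$ together with the hypothesis $p\le 1$ is exactly what forces the constants to come out as $1$ and $2$, rather than as some larger prefactors.
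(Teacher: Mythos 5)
Your proof is correct and takes essentially the same route as the paper's: the crucial transcendental fact $\varepsilon\log(1/\varepsilon)\le 1-\varepsilon$ (which you obtain from $\log x\le x-1$ and the paper obtains from concavity of $x\mapsto x\log(1/x)$ with tangent at $x=1$ --- the same tangent-line inequality) settles the left bound, and an elementary case split settles the right. The only cosmetic difference is that you split on $p\log(1/\varepsilon)\le 1$ versus $p\log(1/\varepsilon)>1$, whereas the paper splits on which term attains the minimum; both yield the same two-line arguments.
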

\begin{proof}
First, $p\log(1/\varepsilon)>0$ implies
\begin{equation}
\frac{1}{1+\frac{1}{p\log(1/\varepsilon)}}~\le~p\log\frac{1}{\varepsilon}~.
\end{equation}

Now let $h(x)=x\log\frac{1}{x}$ for $x>0$. Then $h'(x)=-1-\log x$ and $h''(x)=-1/x$, so $h$ is concave, and the tangent at $x=1$ is $1-x$. This gives $\varepsilon p\log(1/\varepsilon)\le\varepsilon\log(1/\varepsilon)\le1-\varepsilon$, which implies
\begin{equation}
\frac{1}{1+\frac{1}{p\log(1/\varepsilon)}}~\le~1-\varepsilon~,
\end{equation}
concluding the left inequality. For the right inequality, if $\min\left\{1-\varepsilon,\,p\log\frac{1}{\varepsilon}\right\}=1-\varepsilon\le1$, then
\begin{equation}
\frac{2}{1+\frac{1}{p \log(1/\epsilon)}}~\geq~\frac{2}{1+\frac{1}{1-\epsilon}}~\geq~ 1-\epsilon~.
\end{equation}
Lastly, if $\min\left\{1-\varepsilon,\,p\log\frac{1}{\varepsilon}\right\}=p\log\frac{1}{\varepsilon}$ then in particular $0<p\log\frac{1}{\varepsilon}\le 1$. Thus, finally,
\begin{equation}
p\log\frac{1}{\varepsilon}~\le~\frac{2}{1+\frac{1}{p\log(1/\varepsilon)}}~.
\end{equation}
\end{proof}


\begin{thebibliography}{99}
\bibitem{nielsenchuang}M.A.\ Nielsen, I.L.\ Chuang, \emph{Quantum Computation and Quantum Information,} Cambridge University Press, Cambridge (2000).
\bibitem{Shor} P.\ Shor, \emph{Scheme for reducing decoherence in quantum memory,} Phys.\ Rev.\ A \textbf{52,} 2493 (1995).
\bibitem{Steane} A.M.\ Steane, \emph{Error correcting codes in quantum theory,} Phys.\ Rev.\ Lett.\ \textbf{77,} 793 (1996).
\bibitem{shorfaulttolerant} P.\ Shor, \emph{Fault-tolerant quantum computation,} Proc.\ of the 37th Annual Symposium on Fundamentals of Computer Science, IEEE Press, Los Alamitos, CA (1996).
\bibitem{klz98b} E.\ Knill, R.\ Laflamme, W.H.\ Zurek, \emph{Resilient quantum computation:\ error models and thresholds,} Proc.\ R.\ Soc.\ A \textbf{454,} 365 (1998).
\bibitem{preskill} J.\ Preskill, \emph{Reliable quantum computers,} Proc.\ R.\ Soc.\ A \textbf{454,} 385 (1998).
\bibitem{gottesman}D.\ Gottesman, \emph{Fault-Tolerant Quantum Computation with Constant Overhead,} arXiv:1310.2984 [quant-ph] (2013).
\bibitem{Zanardi} P.\ Zanardi, \emph{Dissipative dynamics in a quantum register,} Phys.\ Rev.\ A \textbf{56,} 4445 (1997).
\bibitem{Lidar} D.A.\ Lidar, I.L.\ Chuang, K.B.\ Whaley, \emph{Decoherence-Free Subspaces for Quantum Computation,} Phys.\ Rev.\ Lett.\ \textbf{81,} 2594 (1998).
\bibitem{TheGrover} L.K. Grover, \emph{Quantum mechanics helps in searching for a needle in a haystack,} Phys.\ Rev.\ Lett.\ \textbf{79,} 325 (1997).
\bibitem{BBBV} C.H.\ Bennett, E.\ Bernstein, G.\ Brassard, U.\ Vazirani, \emph{Strengths and Weaknesses of Quantum Computing,} SIAM J.\ Comput.\ \textbf{26,} 1510 (1997).
\bibitem{BBHT} M.\ Boyer, G.\ Brassard, P.\ Hoyer, A.\ Tapp, \emph{Tight bounds on quantum searching,} Fortsch.\ Phys.\ \textbf{46,} 493 (1998).
\bibitem{Zalka} C.\ Zalka, \emph{Grover's quantum searching algorithm is optimal,} Phys.\ Rev.\ A \textbf{60,} 2746 (1999).
\bibitem{PabloNorman} B.\ Pablo-Norman, M.\ Ruiz-Altaba, \emph{Noise in Grover’s quantum search algorithm,} Phys.\ Rev.\ A \textbf{61,} 012301 (1999).
\bibitem{Lang} G.L.\ Long, Y.S.\ Li, W.L.\ Zhang, C.C.\ Tu, \emph{Dominant gate imperfection in Grover's quantum search algorithm,} Phys.\ Rev.\ A \textbf{61,} 042305 (2000).
\bibitem{Azuma} H.\ Azuma, \emph{Decoherence in Grover's quantum algorithm:\ Perturbative approach,} Phys.\ Rev.\ A \textbf{65,} 042311 (2002).
\bibitem{Shapira} D.\ Shapira, S.\ Mozes, O.\ Biham, \emph{Effect of unitary noise on Grover's quantum search algorithm,} Phys.\ Rev.\ A \textbf{67,} 042301 (2003).
\bibitem{Hsieh} J.\ Hsieh, C.\ Li, D.\ Chuu, \emph{An improved phase error tolerance in Quantum search algorithm,} Chin.\ J.\ Phys.\ \textbf{42,} 585 (2004).
\bibitem{SBW} N.\ Shenvi, K.R.\ Brown, K.B.\ Whaley, \emph{Effects of a random noisy oracle on search algorithm complexity,} Phys.\ Rev.\ A \textbf{68,} 052313 (2003).
\bibitem{RegevSchiff} O.\ Regev, L.\ Schiff, \emph{Impossibility of a Quantum Speed-Up with a Faulty Oracle,} Proc.\ of the 35th International colloquium on Automata, Languages and Programming \textbf{1,} 773 (2008).
\bibitem{temme}K.\ Temme, \emph{A note on the runtime of a faulty Hamiltonian oracle,} arXiv:1404.1977 [quant-ph] (2014).
\bibitem{ashley}A.\ Montanaro, \emph{Quantum search with advice,} Proc.~of the 5th Conference on Theory of Quantum Computation, Communication and Cryptography, pg.\ 77, Springer, Berlin (2010).
\bibitem{qrefrigpaper}M.\ Ben-Or, D.\ Gottesman, A.\ Hassidim, \emph{Quantum refrigerator,} arXiv:1301.1995 [quant-ph] (2013).

\end{thebibliography}
\end{document}